\newcommand{\lra}{\leftrightarrow}
\newcommand{\TTT}{\mbox{T}}
\newcommand{\FFF}{\mbox{F}}
\newcommand{\XXX}{\mbox{X}}
\newcommand{\MMM}{\mbox{M}}
\newcommand{\KKK}{\mbox{K}_3}
\newcommand{\LLL}{\mbox{L}_3}
\newcommand{\BBB}{\mbox{B}_3}
\newcommand{\bb}{\mbox{B}_2}
\newtheorem{theorem}{Theorem}[section]
\newtheorem{proposition}[theorem]{Proposition}
\newtheorem{lemma}[theorem]{Lemma}
\newtheorem{definition}{Definition}[section] 
\newtheorem{example}{Example}[section]
\title{A Multiple-Valued Logic Approach to the Design and Verification of Hardware Circuits}
\author{Amnon Rosenmann \\
Institute of Mathematical Structure Theory \\
Graz University of Technology, Graz, Austria \\
rosenmann@math.tugraz.at}
\date{}
\begin{document}
\maketitle

\begin{abstract}
We present a novel approach, which is based on multiple-valued logic (MVL), to the verification and analysis of digital hardware designs, which extends the common ternary or quaternary approaches for simulations.
The simulations which are performed in the more informative MVL setting reveal details which are either invisible or harder to detect through binary or ternary simulations.
In equivalence verification, detecting different behavior under MVL simulations may lead to the discovery of a genuine binary nonequivalence or to a qualitative gap between two designs.
The value of a variable in a simulation may hold information about its degree of truth and its ``place of birth'' and ``date of birth''.
Applications include equivalence verification, initialization, assertions 
generation and verification, partial control on the flow of data by prioritizing and block-oriented simulations.
Much of the paper is devoted to theoretical aspects behind the MVL approach, including the reason for choosing a specific algebra for computations, and the introduction of the verification complexity of a Boolean expression.
Two basic algorithms are presented.
\end{abstract}
%
%


\section{Introduction}
\label{sec:intro}
The verification and analysis of digital hardware (HW) circuits \cite{L05} has long become a major challenge during the design process.
While formal verification methods such as model checking \cite{CGP01} of properties and formal equivalence checking \cite {MM04}, \cite{KSM10} are complete, they can only be applied to designs of limited size.
The traditional and older method of verification through simulations is incomplete, however it can be applied to larger designs.
Hybrid verification methods, which combine concrete or symbolic simulations with formal methods, are also common \cite{Bv06}.

In simulations based on ternary logic (see e.g. \cite{DJD07}) the domain of values of each signal is extended to include a  ``don't care'' (sometimes ``unknown'') value $\XXX$.
It is also common to perform simulation based on quaternary logic, which include a fourth ``high-impedance'' $\mbox{Z}$ value.
Such logics are also used for abstracting symbolic simulations \cite{WD00}, \cite{WDB00}, \cite{Bv06}, in the model checking technique Symbolic Trajectory Evaluation (STE) \cite{SB95}, in the initialization phase and in equivalence verification \cite{RH02}.

The extension to Multiple-Valued-Logic (MVL) beyond $3$ or $4$ values normally refers to representing a collection of bits as a word or a collection of memory elements as a register when performing simulations with hardware description languages  (see e.g. \cite{R96}).
In addition, some memory devices, arithmetic blocks and FPGAs operate with inputs and outputs which are not binary but multiple-valued.
In general, combinational designs which represent Boolean functions of several variables $f : \{0,1\}^n \mapsto \{0,1\}$ are naturally studied for their algebraic or analytic properties as operating on multi-valued domains of words of length $n$ (see, e.g. \cite{O14}).  

The approach presented here is not to use MVL for treating a collection of binary elements as basic units but rather for performing MVL operations on the binary gate-level elements, extending the ternary-based simulations methodology.
The extension is done by adopting the semantics of the standard fuzzy operators (Zadeh operators): the AND, OR and NOT gates are transformed into the minimum, maximum and negation operators, and the binary domain to $\widehat{\mathbb{Z}}$, an extension of the set of integers with $\pm \infty$ (where $0$ is mostly ignored).

This extension is simultaneously of a refinement and of an abstraction nature.
The refinement comes from the wider domain of values, which can distinguish between designs that are binary equivalent.
In some cases such a distinction refers to differences in the qualities of the designs.
In other cases it can hint to the existence of a binary nonequivalence, which may be difficult to detect.
Since nonequivalence in the MVL setting is easier to find, we can search in the near environment of an MVL nonequivalence for a ``genuine'', i.e. binary, nonequivalence.
We present an algorithm which is based on these ideas.
 
The abstraction side of performing simulations over MVL is due to being able to treat some of the values as both ``care'' and``don't care'', such that the simulation results can be projected both to binary and to ternary logic.
Unlike simulations done in ternary logic, in the more informative MVL the boundary between the ``care'' and ``don't care'' values need not be determined in advance but rather is dynamic and set upon each simulation according to the output value.
This property (as stated in Theorem~\ref{thm:main_general}) is a key factor in applying MVL for the verification of binary designs.    
We would like to emphasize that this kind of fuzziness is not a matter of interpretation.
Once the outcome of a simulation is obtained, the vagueness disappears and the boundary between the ``care'' and the ``don't care'' values is clear. 

Another special characteristic of MVL simulations is that we can incorporate more information into the domain of values, e.g. temporal and space information. 
Thus, whereas in binary logic we can observe the change in values of a specific variable along time, in MVL simulations of sequential designs we can observe also the change in space of a specific value along time.

The picture is the following.
Suppose that the inputs to a combinational design are assigned values which are of distinct absolute values.
Then these absolute values are spread along the design in the form of a spanning forest.
In particular, there is a path leading from each primary output to an input variable.
In sequential designs, the input values may be augmented with ``date of birth'', such that at each state of a simulation sequence the values of the signals represent, in addition to truth degree, the time when these values were first introduced (and we can also know at which input signal).

Applications include equivalence verification, initialization, assertions generation and verification, partial control on the flow of data by prioritizing and blocks-oriented simulations.
Basic algorithms and general directions towards achieving these goals are presented.

A large part of the paper is devoted to the theory behind the MVL approach that we present.
In Section~\ref{sec:MVL} we analyze the type of MVL that meets our needs and its appropriate semantics $\MMM$.
We also discuss the problematics of ternary logic which is commonly used in HW simulations.
In Section~\ref{sec:computation} we prove the fundamental theorem about the information gained from evaluating Boolean expressions over $\MMM$, on which our approach for simulations relies.
These results have strong connection to the Disjunctive Normal Form (DNF) of the Boolean expressions, when the reductions towards DNF are done according to the laws of De Morgan algebras, as demonstrated in Section~\ref{sec:DNF}.
The DNF plays a role in the definition of the verification complexity that we introdcuce in Section~\ref{sec:complexity}.
This kind of complexity refers to the difficulty of functional validation of a Boolean expression, and differs from the usual complexity which relies on the size of the Boolean expression. 

Section~\ref{sec:combinational} deals with performing simulations over $\MMM$ in the verification of combinational circuits.
A basic algorithm for computing maximal abstract valuations is given, and this algorithm can serve within more complex algorithms for different verification tasks.
An example for such an algorithm is one which is devoted to equivalence verification, as described above (searching for binary nonequivalence in the near environment of an $\MMM$-nonequivalence).
In Section~\ref{sec:sequential} we discuss briefly the potential of $\MMM$-based simulations in the verification of sequential circuits, including the importance of including temporal data in the simulation.

\section{A Suitable MVL and its Semantics $\MMM$}
\label{sec:MVL}
Most modern digital computers are based on binary Boolean algebra,  denoted here $\bb$.
It has two values: $\TTT$ (True, 1) and $\FFF$ (False, 0), and operators like $\lnot$ (NOT, negation, complement), $\land$ (AND, conjunction, meet), $\lor$ (OR, disjunction, join).
Other operators may be defined through these operators, e.g. implication $\varphi \rightarrow \psi$ is defined to be $\lnot \varphi \lor \psi$.

Our goal is to transform circuit designs which are based on $\bb$ to designs which are based on MVL, such that simulations performed on the transformed designs will be more informative than the ones performed on the original designs.
The significant point here is that the information gained through the MVL simulations should be applicable to the original binary designs, since, after all, these are the ones that need to be verified.

First, let us look at the most common extensions, i.e. to ternary logics.
There are several possible such extensions, and we refer here to 3 known ones:
Kleene's ``strong'' logic $\KKK$ \cite{K38}, {\L}ukasiewicz' $\LLL$ \cite{L30} and Bochvar's $\BBB$ \cite{B37} (also known as Kleene's ``weak'' logic).
In addition to $\TTT$ and $\FFF$, they all contain a third value, denoted here by $\XXX$.
The three logics interpret $\XXX$ differently.
\begin{itemize}
\item In $\KKK$ the meaning of $\XXX$ is some ``vague'' value between $\TTT$ and $\FFF$, which is neither $\TTT$ nor $\FFF$.
Hence, we have $\XXX \rightarrow \XXX = \lnot \XXX \lor \XXX = \XXX$.
\item In $\LLL$ the value $\XXX$ represents ``uncertainty'': it can be either $\TTT$ or $\FFF$.
Hence, $\XXX \rightarrow \XXX = \TTT$ since $\varphi \rightarrow \varphi$ is a tautology in binary logic.
Note, however, that the two binary equivalent formulas $\varphi \rightarrow \psi$ and $\lnot \varphi \lor \psi$ are not equivalent in $\LLL$: the law of excluded middle does not hold and $\lnot \XXX \lor \XXX = \XXX$.
\item In the logic $\BBB$ $\XXX$ is interpreted as ``meaningless'' (or ``undefined'' in our modern Computer Science terminology).
Hence, any expression that contains at least one $\XXX$ value is evaluated to $\XXX$.
\end{itemize}

A signal in a circuit is supposed to represent some binary value, either $\TTT$ or $\FFF$.
When performing simulations or formal verification over ternary logic, there are two main reasons for assigning the value $\XXX$ to a variable $v$.
\begin{itemize}
\item One is for representing ``uncertainty'', i.e. when the binary value of $v$ is unknown or not supposed to be determined.
\item The other is for expressing ``don't care'', e.g. when the output of an element does not depend on the binary value of $v$, or when we want to abstract away from the concrete setting.
\end{itemize}

Our intention is to extract more information about the binary design when performing MVL simulations, but in a way that conforms with the original (binary) behavior of the system.
Thus, $\BBB$ is not suited for this purpose because it blocks any extra information that may be learned about the design beyond the fact that there exists some variable with an $\XXX$ value in case the output is $\XXX$.
In $\KKK$ both $v \rightarrow v$ and $\lnot v \lor v$ equal $\XXX$ when $v$ is assigned the value $\XXX$, although the value of the output signal is always $\TTT$ in the circuit itself.
Consequently, $\KKK$ may be less informative (or of higher entropy) than $\bb$ .
Nevertheless, the logic $\KKK$ is the one that prevails in HW verification.    
The same problem with $\lnot v \lor v$ exists in $\LLL$, and in addition, the fact that $\varphi \rightarrow \psi$ and $\lnot \varphi \lor \psi$ are not equivalent in $\LLL$ is another inconsistency with $\bb$.

In order to overcome the limitations of the ternary extensions shown above, we will apply MVL in a (maybe surprising) way that will keep the boundary between the ``don`t care'' and ``care'' values flexible and dynamic.
It will always be possible to map the simulations done in MVL to $\bb$ in a fixed manner and without any vagueness.
On the other hand, each simulation will tell us which values are for sure ``don't care'' for this specific simulation.
The expressions $\varphi \rightarrow \psi$ and $\lnot \varphi \lor \psi$ will be equivalent in the new setting.
Moreover, they will always be evaluated to $\TTT$ when mapped to $\bb$.
When mapped to ternary values with $\varphi$ and $\psi$ mapped to $\XXX$ then $\varphi \rightarrow \psi$ and $\lnot \varphi \lor \psi$ will also be mapped to $\XXX$, as in $\KKK$ (and clearly, if $\varphi$ is mapped to $\FFF$ or $\psi$ to $\TTT$ then $\varphi \rightarrow \psi$ and $\lnot \varphi \lor \psi$ will be mapped to $\TTT$).


Now we come to general multiple-valued logics.
These are logics with more than $2$ values, including infinitely-many values \cite{G01}, \cite{B08}.
Such systems were introduced by {\L}ukasiewicz, G$\ddot{\mbox{o}}$del, Post and many others.
Chang \cite{C58}, \cite{C59} introduced MV-algebras, which generalize Boolean algebras, in order to study {\L}ukasiewicz' logics.
Zadeh introduced fuzzy sets and fuzzy logic \cite{Z65}, \cite{Z96}, \cite{NPM99}, \cite{B08}, where the domain of values is infinite: the closed unit interval.

Since we want the MVL simulations to conform with both $\bb$ and $\KKK$, the algebraic laws of these logics should hold in the chosen MVL.
In addition, we need to choose a suitable semantics $\MMM$ for realizing the MVL.
So, first we need two designated elements denoted by $\top$ and $\bot$, corresponding to $\TTT$ and $\FFF$, and three operators $\land$, $\lor$ and $\lnot$.
Then, there should be at least one {\em homomorphism} $p : \MMM \to \bb$ and at least one homomorphism $p : \MMM \to \KKK$,
such that $p(\top) = \TTT$ and $p(\bot) = \FFF$,
(Recall that a homomorphism is a map that respects the operations: $p(a \land b) = p(a) \land p(b)$, $p(a \lor b) = p(a) \lor p(b)$ and $p(\lnot a) = \lnot p(a)$.)

A natural demand is that the following set of laws of De Morgan algebras should hold in $\MMM$:
\begin{enumerate}
\item Commutativity: $a \land b = b \land a$ and $a \lor b = b \lor a$;
\item Associativity: $a \land (b \land c) = (a \land b) \land c$ and $a \lor (b \lor c) = (a \lor b) \lor c$;
\item Idempotence: $a \land a = a$ and $a \lor a = a$;
\item Absorption: $a \land (a \lor b) = a$ and $a \lor (a \land b) = a$;
\item Distributivity: $a \land (b \lor c) = (a \land b) \lor (a \land c)$ and $a \lor (b \land c) = (a \lor b) \land (a \lor c)$;
\item Identity: $a \land \top = a$ and $a \lor \bot = a$;
\item Consumption: $a \land \bot = \bot$ and $a \lor \top = \top$;
\item Duality: $\lnot \bot = \top$ and $\lnot \top = \bot$;
\item Double Negation: $\lnot \lnot a = a$;
\item De Morgan: $\lnot (a \land b) = \lnot a \lor \lnot b$ and $\lnot (a \lor b) = \lnot a \land \lnot b$;
\end{enumerate}
Note that for a minimal set, the first law at each line suffices.
Also, Absorption may be defined through Identity, Distributivity and Consumption.

The question is how to treat the {\em complementation law}: $a \lor \lnot a = \top$ and $a \land \lnot a = \bot$ of Boolean algebras.
It should clearly hold for $a=\top$ and for $s=\bot$.
However, we do not want it to hold for all other values of $\MMM$ (as in MV-algebras, which provide semantics to generalizations of $\LLL$),
because then we will not gain any further information from working over MVL.
So, we replace the complementation law with the weaker {\em orthocomplementation law}: $a \lor \lnot a = \top$ should hold for $\top$ and $\bot$ but not necessarily for all elements.
It is easy to see that this requirement is satisfied in De Morgan algebras.

With the above rules we can form a lattice.
Better though is to have a complete ordered set, so that any two elements of $\MMM$ could be compared, with $\bot$ and $\top$ being the minimal and the maximal elements respectively: $a > \bot$ for every $a \neq \bot$, and $ a < \top$ for every $a \neq \top$.
Given a lattice, one defines $a \leq b$ if and only if $a \land b = a$ and $a \lor b = b$.
Thus, in an ordered set the operator $\land$ is defined to be the {\em minimum} and $\lor$ is defined to be the {\em maximum}.
By De Morgan law, we have: $a \leq b$ implies $\lnot b \leq \lnot a$, which then implies:
\begin{enumerate}[resume]
\item For all $a, b$: $\; \; a \land \lnot a \leq b \lor \lnot b$ \, .
\end{enumerate}
A system which satisfies the above $11$ laws is called a Kleene algebra (we remark that there exist in the literature other definitions of Kleene algebras).

One possible semantics that meets all the above requirements is that of fuzzy logic, with the set of values being the closed unit interval, with $1$ representing $\top$ and $0$ representing $\bot$, and the operators minimum (for $\land$), maximum (for $\lor$) and complement $a \mapsto 1 - a$ (for $\lnot a$).
Note that another common semantics for fuzzy logic, in which multiplication comes instead of the maximum operation for $\land$, is rejected since when mapping to $\KKK$ it may happen that $a$ and $b$ will be mapped to $\top$ while $a * b$ will be mapped to $\XXX$ -- which is not a homomorphism.

For convenience, instead of the unit interval of the continuum cardinality we choose for the domain of values of $\MMM$ the countable set $\widehat{\mathbb{Z}} = (\mathbb{Z} \setminus \{0\}) \cup \{ -\infty, \infty \}$, with the operations $\land, \lor$ and $\lnot$ interpreted as minimum, maximum and negation respectively.
The reason for working over $\widehat{\mathbb{Z}}$ instead of over $[0, 1]$, with $0$ instead of $0.5$ as the mid-point of symmetry, is that it enables using the notion of absolute value, which plays a crucial role in the theory that will be presented.
In practice, we do not need the whole range of values of the integers, and a finite symmetric set around $0$ suffices.
The fact that we omit the value $0$ from the domain of values has to do with the above discussion of being able to treat values simultaneously as ``care'' and ``don't care'', and gaining more information from computations.
However, in cases where these considerations do not matter, we may use also the value $0$ when taking into account complexity considerations since $0$ behaves like the value $X$ in ternary logic: it equals its own negation.


In Table~\ref{table:op} we demonstrate the behavior of the operators $\lnot, \land, \lor$ and $\oplus$ (Exclusive-Or, i.e. $a \oplus b := (a \land \lnot b) \lor (\lnot a \land b)$) in $\MMM$.
\begin{table}[h!]
\vspace{-15pt}
\centering
\begin{tabular}{ r | r || r | r | r | r | r }
$ \; {\bf a} \; \; $ & \; ${\bf b} \; \;  $ & \; ${\bf \lnot a} \;  $ & \; ${\bf\lnot b} \; $ & \; ${\bf a \land b}$  \;  & \; ${\bf a \lor b}$ \; &  \; ${\bf a \oplus b}$ \\
\hline \hline
 -2 \, & -1 \, & 2  \, & 1 \, & -2 \, \, \, & -1 \, \, \, & -1 \, \, \\
\hline
 -2 \, & 1 \, & 2  \, & -1 \,  & -2 \, \, \, & 1 \, \, \, & 1  \, \, \\
\hline
 -1 \, & 2 \, & 1 \,  & -2 \, & -1 \, \, \, & 2 \, \, \, & 1  \, \, \\
\hline
 1 \, & 2 \, & -1 \, & -2 \,  & 1 \, \, \, & 2 \, \, \, & -1  \, \, \\
 \end{tabular}
\caption{$\MMM$ operators}
 \label{table:op}
\vspace{-15pt}
\end{table}
%

The homomorphism $p: \MMM \to \bb$ is clear: $p(a) = \FFF$ for $a < 0$, $p(a) = \TTT$ for $a > 0$. 
%
Then, for every $n >0$, $n \in \mathbb{Z}$, we define $p_n : \MMM \to \KKK$ by: 
 \begin{equation}
p_n(a) = \left\{ \begin{array}{rcl}
\FFF & \mbox{\; \; \; for} & a \leq -n \\ 
\XXX & \mbox{\; \; \; for} & -n < a < n \\
\TTT & \mbox{\; \; \; for} & a \geq n \, .
\end{array}\right.
\label{eq:MtoK3}
\end{equation}
%


\section{Computation over $\MMM$}
\label{sec:computation}
A {\em valuation} $v$ of the variables $x_1, \ldots, x_n$ in $\MMM$ (that is, in the domain $\widehat{\mathbb{Z}}$) is a mapping $\llbracket x_1 \rrbracket_v = a_1$, $\ldots$, $\llbracket x_n \rrbracket_v = a_n$, $a_i \in \MMM$.
Given an expression (a propositional formula) $\varphi(x_1, \ldots, x_n)$ and a valuation $v$ as above, we denote by $\llbracket \varphi \rrbracket_v$ the evaluation $\llbracket \varphi \rrbracket_v = \varphi(a_1, \ldots, a_n) \in \MMM$ when looking at $\varphi$ as a function $\varphi : \MMM^n \mapsto \MMM$.
The expression $\varphi$ can be represented as a Directed Acyclic Graph (DAG) $G = G_{\varphi}$, and we denote the computation graph of $\llbracket \varphi \rrbracket_v$ by $G(a_1, \ldots, a_n)$.
The leaves of $G(a_1, \ldots, a_n)$ are labeled with $a_1, \ldots, a_n$, its root - with the value $\varphi(a_1, \ldots, a_n)$, and each internal node representing a sub-expression $\psi$ - with the value $\llbracket \psi \rrbracket_v$.
\begin{proposition}
\label{prop:path}
Let  $\varphi(x_1, \ldots, x_n)$ be an expression and let $\llbracket x_1 \rrbracket_v = a_1, \ldots$, $\llbracket x_n \rrbracket_v = a_n$ be a valuation in $\MMM$.
Let $G(a_1, \ldots, a_n)$ be the corresponding computation graph.
Then, for some $i$, $1 \leq i \leq n$, $| \varphi(a_1, \ldots, a_n)| = | a_i |$, and there exists a path (at least one) from the root of $G$ to a leaf of it, such that the label of each node along this path is of absolute value $| a_i |$. 
\end{proposition}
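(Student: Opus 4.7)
My approach is a straightforward structural induction on the formula $\varphi$, exploiting two very simple properties of the semantics $\MMM$: negation is the map $a \mapsto -a$, which preserves absolute value exactly, and each of the binary operators $\land$ and $\lor$ is interpreted as $\min$ or $\max$, which is \emph{selective} in the sense that $\min(a,b) \in \{a,b\}$ and $\max(a,b) \in \{a,b\}$. Both observations are immediate from the definitions given in Section~\ref{sec:MVL}, and together they give me, at every internal node of $G(a_1,\ldots,a_n)$, a child whose label has the same absolute value as the node itself. Iterating this choice of child from the root downwards produces the desired path.

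In more detail: the base case $\varphi = x_i$ is trivial, since the computation graph is a single leaf labeled $a_i$ and the claim holds with the trivial one-node path. For the inductive step, suppose the proposition holds for all proper subexpressions. If $\varphi = \lnot \psi$, then $\llbracket \varphi \rrbracket_v = -\llbracket \psi \rrbracket_v$, so $|\llbracket \varphi \rrbracket_v| = |\llbracket \psi \rrbracket_v|$; by the induction hypothesis applied to $\psi$, there is an index $i$ with $|\llbracket \psi \rrbracket_v| = |a_i|$ and a path in $G_\psi$ from its root to an input leaf along which every label has absolute value $|a_i|$. Prepending the root of $G$ (whose label also has absolute value $|a_i|$) produces a path of the required form. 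If $\varphi = \psi_1 \land \psi_2$, selectivity of $\min$ gives an index $k \in \{1,2\}$ with $\llbracket \varphi \rrbracket_v = \llbracket \psi_k \rrbracket_v$, so in particular the two absolute values coincide; the induction hypothesis applied to $\psi_k$ supplies an index $i$ and a path in $G_{\psi_k}$ along which every label has absolute value $|a_i| = |\llbracket \varphi \rrbracket_v|$, and prepending the root of $G$ again closes the step. The case $\varphi = \psi_1 \lor \psi_2$ is identical, using selectivity of $\max$ in place of $\min$.

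I do not foresee any substantive obstacle. The only point that deserves a word of care is that, at a binary node, one must choose the child for which the equality of values (not merely of absolute values) holds; having made that choice, the induction hypothesis handed up from that subtree automatically propagates the absolute value all the way down to an input leaf. If the valuation is chosen so that the $|a_i|$ are pairwise distinct (as is envisaged in the informal picture of a spanning forest given in the introduction), then the index $i$ is in fact uniquely determined, but the proposition as stated requires nothing beyond the existence of some such $i$, and the induction delivers this directly.
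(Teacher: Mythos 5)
Your proof is correct and follows essentially the same route as the paper, which argues by induction on the composition depth of $\varphi$ using precisely the fact that negation preserves absolute value and that minimum and maximum each return one of their operands. Your write-up simply makes explicit the inductive step and the construction of the path that the paper leaves implicit.
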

\begin{proof}
By induction on the composition depth of $\varphi$ and by the fact that the operations of negation, maximum and minimum preserve the absolute value of one of the operands.
\end{proof}

To illustrate the result of Proposition~\ref{prop:path}, suppose that the absolute values $| a_1|, \ldots, |a_n|$ are pairwise distinct, and we label all the nodes of the graph $G = G_{\varphi}$ by their values, and the edges - by the label of their initial nodes.
Then we color the vertices of $G$ with $n$ different colors, such that nodes whose labels are of the same absolute value get the same color.
Next, for each non-leaf node $v = u_1 \lor u_2$ or $v = u_1 \land u_2$ ($u_1, u_2$ are the ``input'' nodes of $v$), the value of $v$ equals the value of some input node $u_i$, $i =1$ or $i=2$, (if $v$ equals both inputs then we choose one of them).
Then we color the edge from this input node $u_i$ to $v$ in the same color of $u_i$, and leave the other in-going edge to $v$ uncolored.
If $v = \lnot u$ then we color the edge from $u$ to $v$ in the color of $u$. 
The result is that each subgraph $G_i$, $i = 1, \ldots, n$, consisting of the vertices and edges of the same color, is in the form of a tree, whose root is a primary input (a leaf of $G$).
The union of the disjoint subgraphs $G_i$ forms a {\em spanning forest} of $G$.

When the leaves of $G$ are not of distinct absolute values then still the result is a spanning forest (now we do not have necessarily a specific root for each tree). 
A generalization of this picture of a spanning forest to the case where the operators $\lor$ and $\land$ have more than 2 arguments is straightforward.
\begin{example}
In Fig.~\ref{fig:forest} we can see the combined graph corresponding to the computation of two expressions over $\MMM$.
The operators $\lor$, $\land$ and $\lnot$, interpreted as maximum, minimum and negation in $\MMM$, are represented by the common gate symbols for the same operators.
The result is a computation of a simple combinational circuit design with two outputs.
The additional XOR symbol represents $a \oplus b := (a \land \lnot b) \lor (\lnot a \land b)$.
The valuation of the arguments is of distinct absolute values, and the solid-line colored subgraph forms a spanning forest.
\begin{figure}[hbt]
\centering
\scalebox{0.5}{ \input 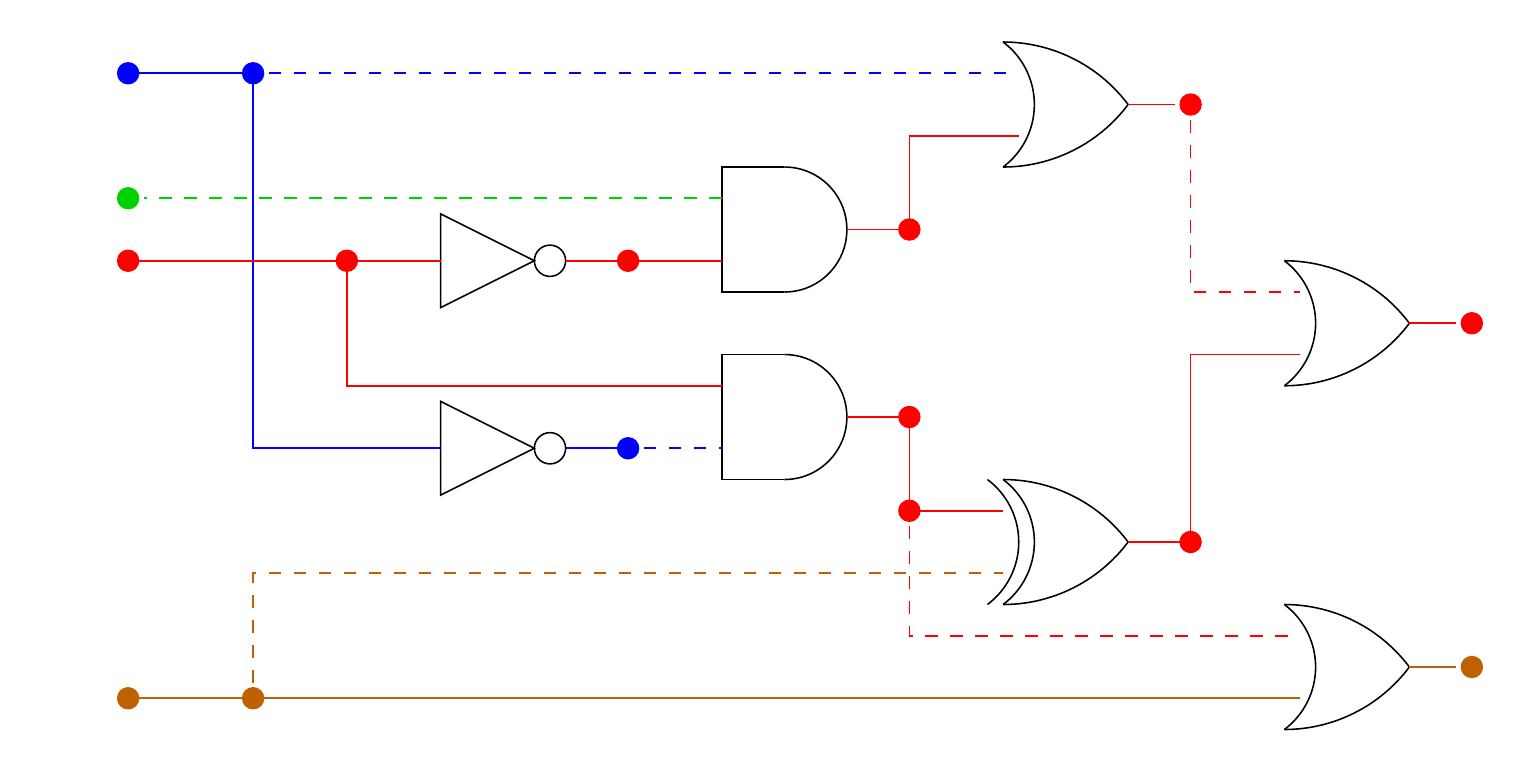_t }
\caption{Spanning forest of a computation over $\MMM$}
\label{fig:forest}
\vspace{-15pt}
\end{figure}
\end{example}
%
%

The next theorem shows how more informative are computations done in $\MMM$ compared to those in the binary setting.
It is not only that the range of values is larger.
The qualitative gap is expressed by the fact that the by the result we know for sure about specific arguments that are ``don't care'' when mapped to $\bb$ - they have no influence on the result of that specific computation (there may, however, be more ``don't care'' arguments).
Since the above applies to each sub-expression of the computation, then by examining internal nodes of the computation graph over $\MMM$ we can extract further information about the computation.
\begin{theorem}
\label{thm:main}
Let $| \varphi(a_1, \ldots, a_n)| = | a_i |$ over $\MMM$ and suppose, without loss of generality, that $|a_1| \leq \cdots \leq |a_{i-1}| < |a_i| \leq \cdots \leq |a_n|$.
Then, when evaluated in $\bb$, the value of $\varphi(b_1, \ldots, b_n)$, $b_1, \ldots, b_n \in \{ \TTT, \FFF \}$, does not depend on $b_1, \ldots, b_{i-1}$ as long as for each $j$, $j \geq i$, $b_j = p(a_j)$.
\end{theorem}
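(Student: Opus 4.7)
The plan is to strengthen the statement slightly so that it lends itself to a clean induction on the composition depth of $\varphi$, and then to deduce the theorem by invoking the homomorphism $p:\MMM\to\bb$.

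Concretely, I would first prove the following stronger claim. Fix a positive $\alpha \in \widehat{\mathbb{Z}}$. For any two $\MMM$-valuations $a=(a_1,\ldots,a_n)$ and $a'=(a'_1,\ldots,a'_n)$ with $a'_j = a_j$ whenever $|a_j|\geq \alpha$, and $|a'_j| < \alpha$ (the sign free) whenever $|a_j| < \alpha$, both of the following hold simultaneously: (i) if $|\varphi(a)|\geq \alpha$, then $\varphi(a') = \varphi(a)$; and (ii) if $|\varphi(a)| < \alpha$, then $|\varphi(a')| < \alpha$.

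The induction is by composition depth of $\varphi$. The base case $\varphi = x_j$ and the $\lnot$ step are immediate because negation preserves absolute value. The crucial step is conjunction $\varphi = \psi_1 \land \psi_2 = \min(\psi_1,\psi_2)$; assume WLOG $\psi_1(a)\leq\psi_2(a)$, so $\varphi(a)=\psi_1(a)$. For (i), if $\psi_1(a)\geq \alpha$ then both children have absolute value $\geq \alpha$ and (i) applied to each closes the case; if $\psi_1(a)\leq -\alpha$, then (i) pins $\psi_1(a')=\psi_1(a)\leq -\alpha$, while (i) or (ii) applied to $\psi_2$ forces $\psi_2(a')>-\alpha$, so the min is still $\psi_1(a')$. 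For (ii), one has $-\alpha<\psi_1(a)\leq\psi_2(a)$, and combining (i) and (ii) on the two children keeps the min strictly between $-\alpha$ and $\alpha$. The disjunction case follows by De Morgan. To deduce the theorem, I apply the stronger claim with $\alpha = |a_i|$: given arbitrary binary values $b_1,\ldots,b_{i-1}\in\{\TTT,\FFF\}$, choose $a'_j = a_j$ for $j\geq i$ and $a'_j\in\{-1,+1\}$ for $j<i$ with sign matching $b_j$; since $|a_i| > |a_{i-1}| \geq 1$ we have $|a'_j|=1<\alpha$ as needed. Part (i) gives $\varphi(a') = \varphi(a)$ in $\MMM$, and applying the homomorphism $p$ produces $\varphi(b_1,\ldots,b_n) = p(\varphi(a))$ in $\bb$, a value that does not depend on $b_1,\ldots,b_{i-1}$.

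The main obstacle, and the reason for strengthening the statement, is the conjunction step. A naive induction carrying only (i) breaks because when $\varphi(a)=\psi_1(a)=-\alpha$ the other child $\psi_2(a)$ can lie in the open interval $(-\alpha,\alpha)$, where (i) gives no control whatsoever; without (ii), perturbing the small-magnitude inputs could in principle drop $\psi_2(a')$ below $-\alpha$ and flip the min. Propagating both parts of the claim in tandem through the induction is exactly what rules this out, and Proposition~\ref{prop:path} is implicitly used throughout: it guarantees that the inductive hypothesis is non-vacuous, since some child always inherits the current absolute value $|\varphi(a)|$.
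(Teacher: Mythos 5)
Your proof is correct, but it takes a genuinely different route from the paper's own proof of this theorem. The paper disposes of Theorem~\ref{thm:main} in a few lines by projecting the computation through the homomorphism $p_{|a_i|}:\MMM\to\KKK$ of Eq.~(\ref{eq:MtoK3}): all of $a_1,\ldots,a_{i-1}$ map to $\XXX$, the output maps to a definite value $p_{|a_i|}(\pm a_i)\neq\XXX$, and the conclusion then follows from the standard monotonicity of Kleene's strong three-valued logic (a definite output is invariant under refining $\XXX$-inputs to binary values). Your argument instead proves from scratch a two-part invariance statement by induction on the structure of $\varphi$; this is essentially the content and the technique of the paper's Theorem~\ref{thm:main_general}, whose proof partitions the nodes of the computation graph into those with operands below, above, or straddling the threshold $b=|a_i|$ --- the same case analysis as your (i)/(ii) dichotomy --- and the paper itself remarks that Theorem~\ref{thm:main} follows from that later result. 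What your route buys is self-containedness: it does not import an external fact about $\KKK$, at the cost of redoing the induction that the $\KKK$-monotonicity lemma encapsulates. Two small points. First, in the subcase $\psi_1(a)\leq-\alpha$ your intermediate assertion that (i) or (ii) ``forces $\psi_2(a')>-\alpha$'' is not literally true when $\psi_2(a)\leq-\alpha$ as well; but there part (i) gives $\psi_2(a')=\psi_2(a)\geq\psi_1(a)=\psi_1(a')$, so the minimum is unchanged anyway --- a cosmetic fix. Second, Proposition~\ref{prop:path} is not actually needed anywhere in your induction; the inductive hypothesis is non-vacuous on its own.
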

\begin{proof}
If $i=1$ then the claim holds trivially, so let $i > 1$.
Suppose that $\varphi(a_1, \ldots, a_n) = a_i$ (the case where the result is $-a_i$ is similar).
Let $p : \MMM \to \KKK$ be the homomorphism $p = p_{|a_i|}$, hence
$p(\pm a_1) = \cdots = p(\pm a_{i-1}) = \XXX$.
Therefore, over $\KKK$,
$\varphi(\XXX, \ldots, \XXX, p(a_i), \ldots, p(a_n))$
$= \varphi(p(a_1), \ldots, p(a_{i-1}), p(a_i), \ldots, p(a_n))$
$= p(\varphi(a_1, \ldots, a_n))$
$= p(a_i) \neq \XXX$.
%
As is known, when an expression over $\KKK$ is evaluated to $\TTT$ or to $\FFF$ then the result is invariant to any binary value given to  variables of $\XXX$ values.
\end{proof}
In fact, the above theorem follows by Theorem~\ref{thm:main_general} below.
\begin{lemma}
\label{lem:sign}
Let $a, b \in \MMM$. If $|a| > |b|$ then $a > b \Leftrightarrow a > -b$ and similarly $a < b \Leftrightarrow a < -b$.
\end{lemma}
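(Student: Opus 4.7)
The plan is to reduce the statement to the observation that, under the hypothesis $|a|>|b|$, both $b$ and $-b$ lie strictly inside the symmetric ``interval'' $(-|a|,|a|)$ of $\widehat{\mathbb Z}$, so that the inequalities $a>b$ and $a>-b$ are each equivalent to the single condition ``$a$ is positive''. The whole argument is therefore a short case split on the sign of $a$, once this geometric picture is set up.

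First I would note that $|a|>|b|\ge 0$ forces $a\neq 0$, so $a$ is either strictly positive or strictly negative in $\widehat{\mathbb Z}$; in particular $a\in\{|a|,-|a|\}$ (this works uniformly whether $|a|$ is finite or equals $\infty$, since $\pm\infty$ are the maximum and minimum of $\widehat{\mathbb Z}$). Next I would record the two chains of inequalities
\[
-|a|\;<\;-|b|\;\le\;b\;\le\;|b|\;<\;|a|,\qquad -|a|\;<\;-|b|\;\le\;-b\;\le\;|b|\;<\;|a|,
\]
which follow directly from $|a|>|b|$ and the definition of absolute value.

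Then I would split into cases. If $a>0$, then $a=|a|$, and both chains above give $a>b$ and $a>-b$; so both sides of the first equivalence are true and both sides of the second are false, hence both equivalences hold vacuously. If $a<0$, then $a=-|a|$, and the chains give $a<b$ and $a<-b$; so both sides of the second equivalence are true and both sides of the first are false, and again both equivalences hold. That exhausts the cases and proves the lemma.

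There is no real obstacle here; the only point that needs a sentence of care is the treatment of $\pm\infty$ and the exclusion of $0$ from the domain $\widehat{\mathbb Z}$, which is handled by remarking that $|a|>|b|$ already excludes $a=0$ and that $\pm\infty$ behave as the extremal elements under $<$. The lemma will then be used in the proof of Theorem~\ref{thm:main_general} precisely to argue that the ``sign'' of the dominant operand controls comparisons with any operand of smaller absolute value, regardless of negations introduced by $\lnot$.
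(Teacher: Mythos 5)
Your proof is correct. The paper states this lemma without any proof at all (it is treated as an immediate observation), so there is nothing to compare against; your case split on the sign of $a$, using that $|a|>|b|$ places both $b$ and $-b$ strictly between $-|a|$ and $|a|$, is a complete and valid justification (the only cosmetic quibble is that an equivalence with both sides false holds \emph{trivially} rather than ``vacuously'').
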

\begin{theorem}
\label{thm:main_general}
Let $| \varphi(a_1, \ldots, a_n)| = | a_i | = b$ over $\MMM$ and suppose, without loss of generality, that $|a_1| \leq \cdots \leq |a_{i-1}| < |a_i| = \cdots = |a_{i+r}| < 
|a_{i+r+1 }| \leq \cdots \leq |a_n|$.
Then $\varphi(a_1, \ldots, a_n)$ is invariant to any change in $a_1, \ldots, a_{i-1}$ (including change of sign) as long as the new values are of absolute value less than $b$.
Neither does any change in value to $a_{i+r+1 }, \ldots, a_n$ affect the result $\varphi(a_1, \ldots, a_n)$, as long as the new values are of absolute value greater than $b$ and there is no change in sign.
\end{theorem}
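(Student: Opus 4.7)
The plan is to reduce the theorem to two structural stability lemmas about subexpression values under a single coordinate change, and then iterate to obtain the simultaneous modification described in the statement. The driving observation is Lemma~\ref{lem:sign}: whenever one operand of $\min$ or $\max$ strictly dominates the other in absolute value, the outcome depends only on the sign of the dominating operand, regardless of the smaller operand.

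First I would establish a \emph{small-change lemma}: if $a_k$ is replaced by $a'_k$ with $|a_k|, |a'_k| < c$ for some threshold $c > 0$, then for every subexpression $\psi$ of $\varphi$ either $\psi(a) = \psi(a')$, or both $|\psi(a)| < c$ and $|\psi(a')| < c$. The proof is by induction on the composition depth of $\psi$; the base and the $\lnot$-step are immediate because negation preserves absolute value. For $\psi = \psi_1 \lor \psi_2$ (and dually $\psi_1 \land \psi_2$), set $w_j = \psi_j(a)$, $w'_j = \psi_j(a')$, and combine the inductive dichotomies for $\psi_1$ and $\psi_2$. If all four values have absolute value below $c$, then $|\max(w_1, w_2)| \leq \max(|w_1|, |w_2|) < c$, and similarly for the primed max. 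Otherwise, without loss of generality $w_2 = w'_2$ with $|w_2| \geq c > |w_1|, |w'_1|$, and Lemma~\ref{lem:sign} forces both $\max$-outcomes to be determined by the sign of $w_2$: either $w = w' = w_2$ when $w_2 > 0$, or $w = w_1$, $w' = w'_1$ (both of absolute value $< c$) when $w_2 < 0$.

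Next I would prove a dual \emph{large-change lemma}: if $|a_k|, |a'_k| > c$ and $a_k, a'_k$ share a common sign, then for every subexpression $\psi$ either $\psi(a) = \psi(a')$, or $|\psi(a)|, |\psi(a')| > c$ with a common sign. The induction follows the same skeleton, but now the sign-preservation clause must be maintained as part of the invariant. The key step is again the $\max$-case: a short enumeration of the possible sign combinations of $(w_1, w_2)$ and $(w'_1, w'_2)$ shows that when two same-sign quantities of absolute value above $c$ are combined with any third value, the resulting pair of outputs is either equal, or consists of two values of absolute value exceeding $c$ with a common sign.

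To derive the theorem itself, apply each lemma with $c = b$ to $\varphi$. Since $|\varphi(a)| = b$ is neither strictly less than $b$ nor strictly greater than $b$, the second alternative in each lemma is excluded, forcing $\varphi(a') = \varphi(a)$ under any single allowed change of one coordinate from $\{a_1,\ldots,a_{i-1}\} \cup \{a_{i+r+1},\ldots,a_n\}$. The simultaneous modification of several such coordinates follows by iterating one coordinate at a time; after each step $|\varphi|$ remains equal to $b$, so the hypotheses of the next iteration stay valid. The main obstacle I anticipate is keeping the sign bookkeeping straight in the large-change lemma: without the added sign-preservation clause in the inductive invariant, one cannot conclude that the two $\max$-outcomes agree when compared against an unchanged operand, since the sign of the dominating quantity is precisely what Lemma~\ref{lem:sign} exploits to resolve the comparison uniformly before and after the change.
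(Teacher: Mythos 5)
Your proposal is correct and follows essentially the same route as the paper: an induction over the computation graph that tracks, for each subexpression, whether its value is pinned or merely confined to the small (resp.\ large, sign-preserved) regime, with Lemma~\ref{lem:sign} resolving the mixed $\min$/$\max$ nodes, and the conclusion falling out because the root's absolute value is exactly $b$. The only differences are organizational — you change one coordinate at a time and iterate, and you state the inductive dichotomy (including the sign-preservation clause for the large-change case) explicitly, which in fact makes the second half of the argument more complete than the paper's brief ``it is easy to see, again by induction'' remark.
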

\begin{proof}
We partition the nodes of the computation graph $G(a_1, \ldots, a_n)$ into $3$ groups: (i) those representing operators with operands of absolute value less than $b$; (ii) those with operands of absolute value greater than or equal to $b$; (iii) the nodes representing operators with one operand of absolute value less than $b$ and another operand of absolute value greater than or equal to $b$.

Assume an arbitrary change to the arguments $a_j$, $j < i$, as in the theorem. 
Then a node of the first type may change its value, but will remain of absolute value less than $b$.
A node of the second type will keep its original value.
Finally, by Lemma~\ref{lem:sign}, a node of the third type, representing the maximum or minimum operation, will keep its value if it were of absolute value greater than or equal to $b$, and will stay of absolute value less than $b$ (but perhaps of a different value) if so it were before the change.

Indeed, this is certainly the case for the nodes of level $1$ (from bottom), and, by induction on the height of $G$, the same holds for every node of $G$.  
Since the node representing the result of the computation is labeled with absolute value $b$, it will remain unchanged. 

As for the change of the second type, it is easy to see, again by induction, that it can only affect the nodes of absolute value greater than $b$ (but not the signs), thus being irrelevant to the outcome of the computation. 
\end{proof}
%
%


\section{Disjunctive Normal Form over $\MMM$}
\label{sec:DNF}
In Boolean algebra every binary expression $\varphi$ over a set of variables and connectives $\land$, $\lor$ and $\lnot$, can be reduced to an equivalent expression in DNF - a disjunction of conjunctive terms (also called sum-of-products).
Each (conjunctive) term is a conjunction of literals, where a literal is a variable or its negation.
A term is also called an {\em implicant} (or cube) since if $\gamma$ is an implicant of $\varphi$ then for every valuation $v$, $\llbracket \gamma \rrbracket_v = \TTT$ implies $\llbracket \varphi \rrbracket_v = \TTT$.
An implicant $\gamma$ is called a {\em prime implicant} if no subterm of $\gamma$ implies $\varphi$.
The disjunction of all the prime implicants of $\varphi$ is called
Blake Canonical Form (BCF), denoted $\mathcal{B}(\varphi)$.
We remark that not all prime implicants are necessarily essential, that is, there may be prime implicants which are covered by other prime implicants, hence $\mathcal{B}(\varphi)$ is not necessarily minimal in number of terms among the DNF that are equivalent to $\varphi$. 
Another canonical DNF is the Full Disjunctive Normal Form (FDNF), denoted $\mathcal{F}(\varphi)$, which consists of all the minterm implicants, that is, each implicant contains all the variables of $\varphi$ (each variable in a complemented or uncomplemented form).

Let us now explore the DNF notion in De Morgan algebras.
Given an expression $\varphi$ then by the ten rules of De Morgan algebra (see Section~\ref{sec:MVL}) it can be reduced to an equivalent expression $\varphi'$ in DNF.
The reduction to DNF is, however, more restrictive compared ot the binary case.
By De Morgan rules, subterms of the form $x \land \lnot x$ or $x \lor \lnot x$ cannot be reduced.
In fact, the only ways by which a conjunctive term can be reduced in size 
is by using the idempotence and absorption rules, where the former makes sure that no literal appears twice in a term, and the latter assures that no term is a subterm of another.
This leads us to the following definition. 
\begin{definition}
De Morgan Canonical Form {\upshape (DMCF)} of an expression $\varphi$, denoted $\mathcal{M}(\varphi)$, is the unique (up to reordering), expression which is formed from $\varphi$ by De Morgan reductions and which satisfies:
\begin{itemize}
\item $\varphi$ is in {\upshape DNF};
\item No term of $\varphi$ contains the same literal twice;
\item No term of $\varphi$ is a subterm of another term (in particular, no term appears more than once).
\end{itemize}
\end{definition}
%
%

The reduction to $\mathcal{M}(\varphi)$ is done in a standard way by first driving all negation operators inwards into the literals, then reducing to DNF by the distributive and idemopence rules, and finally deleting terms which contain other terms through the absorption rule (commutativity and associativity are used throughout).

Unlike $\mathcal{B}(\varphi)$, the implicant terms in $\mathcal{M}(\varphi)$ are not necessarily prime implicants.
Another difference is that the terms in $\mathcal{M}(\varphi)$ may be contradictory: containing subterms of the form $x\bar{x}$.
Thus, $\mathcal{M}(\varphi)$ can be expressed as $\mathcal{M}(\varphi) = \mathcal{M}(\varphi)_{imp} \vee \mathcal{M}(\varphi)_{cont}$, where $\mathcal{M}(\varphi)_{imp}$ denotes the disjunction of the implicant terms of $\varphi$, and $\mathcal{M}(\varphi)_{cont}$ denotes the disjunction of the contradictory terms of $\mathcal{M}(\varphi)$.

Let us look at the following examples.
For better readability, in examples we will make use of the following notation: $xy$, $x+y$ and $\bar{x}$ instead of $x \land y$, $x \lor y$ and $\lnot x$ respectively.
The notation $\bar{x}$ for literals will be used also outside of examples. 
%
\begin{example}
$\varphi = (x +y)\overline{\bar{x}y}$ $\implies$ $(x+y)(x+\bar{y})$ $\implies$ $x(x+\bar{y}) + y(x+\bar{y})$ $\implies$ $xx + x\bar{y} + xy  + y\bar{y}$ $\implies$ $x + y\bar{y} = \mathcal{M}(\varphi)$, which contains the contradictory term $y\bar{y}$.
Of course, over $\bb$ we would have further reduced it to $\mathcal{B}(\varphi) = x$.
\end{example}
\begin{example}
$\varphi = x\bar{y} + y = \mathcal{M}(\varphi)$, with $ x\bar{y}$ not being a prime implicant.
Here, $\mathcal{B}(\varphi) = x + y$.
\end{example}

We write $\varphi \sim \psi$ when $\varphi$ and $\psi$ are $\bb$-equivalent, i.e. $\llbracket \varphi \rrbracket_v = \llbracket \psi \rrbracket_v$ for every binary valuation $v$.
We write $\varphi \sim_{\MMM} \psi$ for $\MMM$-equivalence, i.e$\llbracket \varphi \rrbracket_v = \llbracket \psi \rrbracket_v$ for every valuation $v$ over $\MMM$.
Clearly, $\varphi \sim_{\MMM} \psi$ implies $\varphi \sim \psi$, but not the other way round.
Note that for every valuation $v$ in $\MMM$, $\llbracket \varphi \rrbracket_v = \llbracket \mathcal{M}(\varphi) \rrbracket_v$.

In special cases, again, unlike the case over $\bb$, two different expressions in DMCF may represent equivalent functions over $\MMM$.
For example, the expression $x\bar{x}(y + \bar{y}) + z$ $\implies$ $x\bar{x}y + x\bar{x}\bar{y} + z$ is equivalent to the expression $x\bar{x} + z$.
Similarly, the two $\MMM$-equivalent expressions $x+\bar{x}+y\bar{y}$ and $x+\bar{x}$ are both in DMCF.

The following analysis is done over $\MMM$, but, in fact, for that matter, ternary logic suffices.
Note that over $\MMM$, since disjunction is interpreted as the maximum operator, then an implicant $\gamma$ of $\varphi$ satisfies the following: $0 < \llbracket \gamma \rrbracket_v$ implies $0 < \llbracket \gamma \rrbracket_v \leq \llbracket \varphi \rrbracket_v$, for every valuation $v$.

\begin{lemma}
\label{lemma:implicant}
Let $\varphi$ and $\psi$ be two expressions,
and suppose there is an implicant term $\gamma \in \mathcal{M}(\varphi)_{imp}$ with no subterm of it in $\mathcal{M}(\psi)_{imp}$.
Then there exists a valuation $v$ in $\MMM$ such that $0 < \llbracket \varphi \rrbracket_v > \llbracket \psi \rrbracket_v$.
\end{lemma}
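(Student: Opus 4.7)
The plan is to exhibit a concrete $\MMM$-valuation $v$ that drives $\llbracket\gamma\rrbracket_v$ (and hence $\llbracket\varphi\rrbracket_v$) up to some large positive value $N$ while forcing every disjunct of $\mathcal{M}(\psi)$ to evaluate strictly below $N$. Since $\llbracket\varphi\rrbracket_v = \llbracket\mathcal{M}(\varphi)\rrbracket_v$ and $\llbracket\psi\rrbracket_v = \llbracket\mathcal{M}(\psi)\rrbracket_v$, and since disjunction in $\MMM$ is the maximum and conjunction the minimum, it suffices to achieve (i) $\llbracket\gamma\rrbracket_v \geq N > 0$ and (ii) $\llbracket\delta\rrbracket_v < N$ for every disjunct $\delta$ of $\mathcal{M}(\psi)$, both implicant and contradictory.

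For step (i), fix a positive integer $N \geq 2$. For every variable $x_i$ occurring in $\gamma$, set $v(x_i) = N$ if $x_i$ appears positively in $\gamma$, and $v(x_i) = -N$ if $\bar{x}_i$ appears in $\gamma$; these choices are unambiguous because $\gamma$, as a member of $\mathcal{M}(\varphi)_{imp}$, is non-contradictory. For every remaining variable $y$ set $v(y) = 1$. Then each literal of $\gamma$ evaluates to $N$, so $\llbracket\gamma\rrbracket_v = N$, which immediately gives $\llbracket\varphi\rrbracket_v \geq N > 0$.

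For step (ii), any contradictory disjunct of $\mathcal{M}(\psi)$ contains some subterm $y\bar{y}$, whose value is $-|v(y)| < 0 < N$, so such a term lies well below $N$. Now let $\delta \in \mathcal{M}(\psi)_{imp}$; by hypothesis $\delta$ is not a subterm of $\gamma$, so $\delta$ must contain some literal $l$ that is not a literal of $\gamma$. There are two sub-cases: either the variable underlying $l$ appears in $\gamma$ with the opposite sign, in which case $v$ sends $l$ to $-N$; or the variable of $l$ does not appear in $\gamma$ at all, in which case $l$ evaluates to $\pm 1$. In either sub-case the value of $l$ is at most $+1$, and hence strictly less than $N$; since $\delta$ is a minimum over its literals, $\llbracket\delta\rrbracket_v \leq l < N$. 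Combining the two clauses gives $\llbracket\psi\rrbracket_v < N \leq \llbracket\varphi\rrbracket_v$, which is precisely the inequality $0 < \llbracket\varphi\rrbracket_v > \llbracket\psi\rrbracket_v$.

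The only real point of care is the two-sub-case analysis in step (ii); this is exactly why the ``off-$\gamma$'' assignment $v(y) = 1$ is chosen with absolute value strictly smaller than $N$, so that the free-variable case cannot inadvertently raise some $\delta \in \mathcal{M}(\psi)_{imp}$ up to the level of $\gamma$. Apart from this bookkeeping, the construction is a direct witness argument with no substantive obstacle.
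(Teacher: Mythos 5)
Your proof is correct and follows essentially the same route as the paper's: the same witness valuation (large positive value on the literals of $\gamma$, $\pm 1$ elsewhere, with the paper simply taking $N=2$), followed by the observation that every disjunct of $\mathcal{M}(\psi)$ must contain a literal evaluating to at most $1$. Your explicit handling of the contradictory terms and the two sub-cases for an off-$\gamma$ literal is a slightly more careful write-up of the same argument.
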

\begin{proof}
Let $v$ be the valuation: $\llbracket x_i \rrbracket_v = 2$ for each literal $x_i$ appearing in $\gamma$, $\llbracket x_j \rrbracket_v = -2$ for each literal $\bar{x}_j$ in $\gamma$, and $\llbracket x_k \rrbracket_v = 1$ for each variable $x_k$ not appearing in $\gamma$.
Then $\llbracket \varphi \rrbracket_v = \llbracket \gamma \rrbracket_v = 2$.
On the other hand, since no term of $\mathcal{M}(\psi)_{imp}$ is a subterm of $\gamma$, each term in $\mathcal{M}(\psi)$, which is positively evaluated by $v$, contains at least one variable which is not in $\gamma$, hence $\llbracket \psi \rrbracket_v  = \llbracket \mathcal{M}(\psi) \rrbracket_v \leq 1$.
%
\end{proof}
%
%

The preceding lemma referred to positive evaluations of expressions.
The next one refers to negative evaluations.
\begin{lemma}
\label{lemma:negval}
If $\varphi \sim \psi$ then $\llbracket \mathcal{M}(\varphi)_{imp} \rrbracket_v = \llbracket \mathcal{M}(\psi)_{imp} \rrbracket_v$ for each valuation $v$ in $\MMM$ for which $\llbracket \varphi \rrbracket_v < 0$ (equivalently, $\llbracket \psi \rrbracket_v < 0$).
\end{lemma}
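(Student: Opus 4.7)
The plan is to express $\llbracket \mathcal{M}(\varphi)_{imp} \rrbracket_v$ in a form that depends only on the Boolean function of $\varphi$ and on the valuation $v$, so that binary equivalence $\varphi \sim \psi$ immediately forces the two evaluations to agree.

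I would first invoke that the map $p : \MMM \to \bb$ sending $a \mapsto \TTT$ iff $a > 0$ is a homomorphism, so $\llbracket \varphi \rrbracket_v < 0$ together with $\varphi \sim \psi$ gives $\llbracket \psi \rrbracket_v < 0$, and the binary projection $v_B := p \circ v$ falsifies both formulae. Let $V^-$ be the set of literals false under $v_B$. Every non-contradictory $\gamma \in \mathcal{M}(\varphi)_{imp}$ is a binary implicant of $\varphi$, hence falsified by $v_B$, so $\gamma \cap V^- \neq \emptyset$. Since $\land$ is interpreted by $\min$ in $\MMM$, $\llbracket \gamma \rrbracket_v$ equals the value of the most-negative literal of $\gamma$, and consequently
\[
\llbracket \mathcal{M}(\varphi)_{imp} \rrbracket_v \;=\; -\,c^*(\varphi), \qquad c^*(\varphi) \;:=\; \min_{\gamma \in \mathcal{M}(\varphi)_{imp}}\,\max\{\,|\llbracket \ell \rrbracket_v| : \ell \in \gamma \cap V^-\,\},
\]
so the task reduces to removing the dependence of the right hand side on the particular terms of the DMCF.

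The central claim will be: for every $c > 0$, $c^*(\varphi) \leq c$ if and only if $\varphi$ admits a binary satisfying assignment $w$ that agrees with $v_B$ on every variable whose $v_B$-false literal has $|v|$-value strictly greater than $c$. For ($\Rightarrow$) I would pick a $\gamma$ witnessing $c^*(\varphi) \leq c$, build $w$ by setting each variable occurring in $\gamma$ so as to satisfy its literal in $\gamma$ and copying $v_B$ on all remaining variables, and observe that on any variable of $\gamma$ whose $v_B$-false literal is ``large'' the literal appearing in $\gamma$ is forced to be the $v_B$-true one, so the construction still agrees with $v_B$ on the required variables. For ($\Leftarrow$) I would use that $\mathcal{M}(\varphi)_{imp} \sim \varphi$ (the terms of $\mathcal{M}(\varphi)_{cont}$ are binary $\FFF$), so $w$ satisfies some $\gamma \in \mathcal{M}(\varphi)_{imp}$; agreement with $v_B$ on the ``large'' variables keeps the corresponding false literals false under $w$, and hence $\gamma$ cannot contain any of them. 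The right hand side of the characterisation depends only on the Boolean function of $\varphi$ and on $v$, so $\varphi \sim \psi$ yields $c^*(\varphi) = c^*(\psi)$ and the lemma follows.

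The main obstacle is precisely this characterisation step. The subtlety is that the terms surviving DMCF reduction are neither arbitrary implicants nor prime implicants, so one has to argue that absorption has not pruned away too much -- a role played by the binary equivalence $\mathcal{M}(\varphi)_{imp} \sim \varphi$ -- while their implicant status still prevents overshoot. The degenerate case in which $\varphi$ is unsatisfiable, so that $\mathcal{M}(\varphi)_{imp}$ is the empty disjunction evaluated to $\bot = -\infty$ on both sides, is handled uniformly and needs only a brief remark.
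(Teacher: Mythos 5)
Your proof is correct, but it takes a genuinely different route from the paper's. The paper sandwiches the quantity between two canonical forms: it shows $\llbracket \mathcal{F}(\varphi) \rrbracket_v \leq \llbracket \mathcal{M}(\varphi)_{imp} \rrbracket_v \leq \llbracket \mathcal{B}(\varphi) \rrbracket_v$ via subterm containments (every DMCF term contains a prime implicant of $\mathcal{B}(\varphi)$; every minterm of the full DNF $\mathcal{F}(\varphi)$ contains a DMCF term), and then closes the sandwich under a negative valuation by extending each prime implicant $\gamma$ with $\llbracket \gamma \rrbracket_v < 0$ to a minterm using only positively-valued literals, which leaves its value unchanged; since $\mathcal{B}$ and $\mathcal{F}$ depend only on the Boolean function, equality follows. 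You instead compute $\llbracket \mathcal{M}(\varphi)_{imp} \rrbracket_v = -c^*(\varphi)$ directly and characterise the level sets $\{c : c^*(\varphi) \leq c\}$ purely in terms of the Boolean function: existence of a satisfying assignment agreeing with $p \circ v$ on all variables whose falsified literal exceeds $c$ in absolute value. Your two directions check out --- in particular the observation that a literal of $\gamma$ belonging to a ``large'' variable must be the $v_B$-true one, and that $\mathcal{M}(\varphi)_{imp} \sim \varphi$ guarantees the satisfying assignment lands in some surviving term --- and the needed fact that every term of $\mathcal{M}(\varphi)_{imp}$ meets $V^-$ follows from $\llbracket \varphi \rrbracket_v < 0$. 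What the paper's route buys is brevity and reuse of standard objects (BCF, FDNF); what yours buys is an explicit semantic reading of the value $\llbracket \mathcal{M}(\varphi)_{imp} \rrbracket_v$ on falsifying valuations as (minus) the largest falsified literal one is forced to flip to reach a satisfying assignment, which is arguably more informative and makes the function-invariance transparent. The two are in fact reconcilable: the common value in the paper's proof is $\llbracket \mathcal{B}(\varphi) \rrbracket_v$, and your $-c^*(\varphi)$ is exactly that quantity described without mentioning prime implicants.
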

\begin{proof}
%
For each implicant $\delta \in \mathcal{M}(\varphi)_{imp}$ there is a prime implicant $\gamma$ of $\mathcal{B}(\varphi)$ (BCF of $\varphi$) which is a subterm of $\delta$. Hence, $\llbracket \delta \rrbracket_v \leq \llbracket \gamma \rrbracket_v$ for each $\MMM$-valuation $v$, and consequently $\llbracket \mathcal{M}(\varphi)_{imp} \rrbracket_v \leq \llbracket \mathcal{B}(\varphi) \rrbracket_v$.

Similarly, for each implicant $\delta$ in $\mathcal{F}(\varphi)$ (FDNF of $\varphi$) there is an implicant $\gamma \in \mathcal{M}(\varphi)_{imp}$ which is a subterm of $\delta$, hence $ \llbracket \mathcal{F}(\varphi) \rrbracket_v \leq \llbracket \mathcal{M}(\varphi)_{imp} \rrbracket_v$ for each $\MMM$-valuation $v$.

To finish the proof we need to show that when $\llbracket \varphi \rrbracket_v < 0$ then the above inequalities are equalities (note that $\mathcal{B}(\varphi) = \mathcal{B}(\psi)$ and $\mathcal{F}(\varphi) = \mathcal{F}(\psi)$).
Let $\gamma$ be a term in $\mathcal{B}(\varphi)$ such that $\llbracket \gamma \rrbracket_v < 0$.
For each variable $x_k$ that does not appear in $\gamma$, let $l_k = x_k$ if $\llbracket x_k \rrbracket_v > 0$, and $l_k = \bar{x}_k$ if $\llbracket x_k \rrbracket_v < 0$.
By the definition of $\mathcal{F}(\varphi)$, there exists a term $\delta$ in $\mathcal{F}(\varphi)$, such that $\gamma$
is a subterm of $\delta$ and the other literals of $\delta$ are the above $l_k$.
Clearly, since $\llbracket \gamma \rrbracket_v < 0$ and for each of the literals $l_k$, $\llbracket l_k \rrbracket_v > 0$, then $\llbracket \gamma \rrbracket_v = \llbracket \delta \rrbracket_v$.
It follows by the maximum operation in DNF that $\llbracket \mathcal{B}(\varphi) \rrbracket_v \leq \llbracket \mathcal{F}(\varphi) \rrbracket_v$, and by the previous inequality in the opposite direction it is an equality.
\end{proof}
\begin{theorem}
\label{thm:BCF}
Let $\varphi$ be an expression satisfying $\mathcal{M}(\varphi) = \mathcal{B}(\varphi)$.
Then for any expression $\psi$ satisfying $\varphi \sim \psi$ and for any valuation $v$ in $\MMM$, $| \llbracket \varphi \rrbracket_v | \geq | \llbracket \psi \rrbracket_v |$.
\end{theorem}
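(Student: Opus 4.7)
The plan is to fix an arbitrary $\MMM$-valuation $v$ and to compare $\llbracket \psi \rrbracket_v$ with $\llbracket \varphi \rrbracket_v$ after first observing that both values share a common sign. Neither value is zero (by Proposition~\ref{prop:path}, since the domain $\widehat{\mathbb{Z}}$ excludes $0$), and applying the homomorphism $p : \MMM \to \bb$ together with the hypothesis $\varphi \sim \psi$ yields $p(\llbracket \varphi \rrbracket_v) = \llbracket \varphi \rrbracket_{p\circ v} = \llbracket \psi \rrbracket_{p\circ v} = p(\llbracket \psi \rrbracket_v)$, so the two $\MMM$-values are simultaneously positive or simultaneously negative. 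Thus the target inequality $|\llbracket \varphi \rrbracket_v| \geq |\llbracket \psi \rrbracket_v|$ amounts to $\llbracket \psi \rrbracket_v \leq \llbracket \varphi \rrbracket_v$ in the positive case and to $\llbracket \varphi \rrbracket_v \leq \llbracket \psi \rrbracket_v$ in the negative case.

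For the positive case, I would exploit that $\mathcal{B}(\psi) = \mathcal{B}(\varphi)$ (the BCF is a $\bb$-invariant) and that, by hypothesis, every prime implicant of $\varphi$ is literally a term of $\mathcal{M}(\varphi)$. Given any term $\delta$ of $\mathcal{M}(\psi)_{imp}$, some prime implicant $\gamma$ of $\mathcal{B}(\psi)$ is a subterm of $\delta$ (the standard observation used already in the proof of Lemma~\ref{lemma:negval}), and that same $\gamma$ then appears as a term of $\mathcal{M}(\varphi)$. Since $\land$ is interpreted as minimum in $\MMM$, dropping literals from a conjunction can only increase the value, so $\llbracket \delta \rrbracket_v \leq \llbracket \gamma \rrbracket_v \leq \llbracket \mathcal{M}(\varphi) \rrbracket_v = \llbracket \varphi \rrbracket_v$. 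Taking the maximum over all such $\delta$, together with the fact that contradictory terms contribute at most $0 < \llbracket \varphi \rrbracket_v$, yields $\llbracket \psi \rrbracket_v \leq \llbracket \varphi \rrbracket_v$.

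The negative case is almost immediate from Lemma~\ref{lemma:negval}. Since $\mathcal{M}(\varphi) = \mathcal{B}(\varphi)$ has no contradictory terms, $\llbracket \varphi \rrbracket_v = \llbracket \mathcal{M}(\varphi)_{imp} \rrbracket_v$, which by the lemma equals $\llbracket \mathcal{M}(\psi)_{imp} \rrbracket_v$. Because $\llbracket \psi \rrbracket_v$ is the maximum of $\llbracket \mathcal{M}(\psi)_{imp} \rrbracket_v$ and $\llbracket \mathcal{M}(\psi)_{cont} \rrbracket_v$, we conclude $\llbracket \psi \rrbracket_v \geq \llbracket \varphi \rrbracket_v$, which is the desired inequality in the negative regime. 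The only delicate point in the whole argument is the bookkeeping in the positive case: one must invoke $\mathcal{M}(\varphi) = \mathcal{B}(\varphi)$ precisely so that every prime implicant shared with $\psi$ is promoted to an actual term of $\mathcal{M}(\varphi)$, whose $\MMM$-value then lower-bounds $\llbracket \varphi \rrbracket_v$; without this hypothesis, a prime implicant could instead sit strictly inside a longer term of $\mathcal{M}(\varphi)$ and the chain of inequalities would break.
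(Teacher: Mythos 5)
Your proof is correct and follows essentially the same route as the paper's: the positive case via the observation that every term of $\mathcal{M}(\psi)_{imp}$ contains a term of $\mathcal{M}(\varphi)=\mathcal{B}(\varphi)$ as a subterm (so min/max monotonicity gives $\llbracket \psi \rrbracket_v \leq \llbracket \varphi \rrbracket_v$), and the negative case via Lemma~\ref{lemma:negval} plus the fact that $\mathcal{B}(\varphi)$ has no contradictory part. You in fact make explicit two points the paper leaves implicit --- the sign agreement of $\llbracket \varphi \rrbracket_v$ and $\llbracket \psi \rrbracket_v$ via the homomorphism to $\bb$, and the prime-implicant justification for the subterm claim --- which is a welcome tightening rather than a different argument.
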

\begin{proof}
Let $\mathcal{M}(\varphi)$ and $\mathcal{M}(\psi)$ be DMCF of $\varphi$ and $\psi$ respectively.
If $\llbracket \varphi \rrbracket_v >0$ then since for every implicant $\delta \in \mathcal{M}(\psi)_{imp}$ there exists $\gamma \in \mathcal{M}(\varphi)$ which is a subterm of $\delta$ then $\llbracket \varphi \rrbracket_v \geq \llbracket \psi \rrbracket_v$ as in the proof of Lemma~\ref{lemma:implicant}.

If, on the other hand, $\llbracket \varphi \rrbracket_v < 0$ then by Lemma~\ref{lemma:negval}, $\llbracket \mathcal{M}(\varphi) \rrbracket_v = \llbracket \mathcal{M}(\psi)_{imp} \rrbracket_v$.
Since $\mathcal{M}(\psi)$ may also contain a contradictory part, $\mathcal{M}(\psi)_{cont}$, the inequality follows.
\end{proof}
\begin{example}      
Let $\mathcal{M}(\varphi) = x + y = \mathcal{B}(\varphi) $ and let $\mathcal{M}(\psi) = x\bar{y} + y$ be two $\bb$-equivalent expressions with different DMCF.
Then, for the valuation $\llbracket x \rrbracket_v = 2$, $\llbracket y \rrbracket_v = -1$, we obtain $\llbracket \mathcal{M}(\varphi) \rrbracket_v = 2$ whereas $\llbracket \mathcal{M}(\psi) \rrbracket_v = 1$.
However, when $\llbracket \mathcal{M}(\varphi) \rrbracket_v < 0$ then $\llbracket \mathcal{M}(\varphi) \rrbracket_v = \llbracket \mathcal{M}(\psi) \rrbracket_v$.
For example, when $\llbracket x \rrbracket_v = -2$, $\llbracket y \rrbracket_v = -1$ then $\llbracket \mathcal{M}(\varphi) \rrbracket_v = \llbracket \mathcal{M}(\psi) \rrbracket_v = -1$.
\end{example}


\section{Verification Complexity over $\MMM$}
\label{sec:complexity}
Suppose we want to know the functionality of a Boolean expression by evaluating it on different test vectors over $\MMM$.
The question is how many test vectors are needed in order to transmit to a verifier a complete knowledge of the functionality of the expression, that is, what is the number of test vectors needed for complete functional verification.
This question is related to minimal number of terms in Disjunctive Normal Forms of the expression.
For that purpose, we define three notions of complexity of Boolean expressions: {\em functional complexity}, {\em structural complexity} and {\em verification complexity}
(not to be confused with complexity defined as the minimal number of operators, or gates in a circuit representation of the expression, as e.g. in \cite{W87}). 

As before, a Boolean expression is composed of variables and the conjunction, disjunction and negation operators, without constants (even for a tautology or a  contradiction).
%
In order to gain complete knowledge on the functionality of a Boolean expression $\varphi$ we need to find all the binary vectors $v$ for which $\llbracket \varphi \rrbracket_v = \TTT$ and all the binary vectors $u$ for which $\llbracket \lnot \varphi \rrbracket_u = \TTT$ (equivalently, $\llbracket \varphi \rrbracket_v = \FFF$).
Note that for representing the function only one of the above is needed.
Note also that there is a one-to-one correspondence between the DNFs of $\lnot \varphi$ and the CNFs (Conjunctive Normal Forms) of $\varphi$, so that the number of conjunctive terms in a DNF of $\lnot \varphi$ equals the number of disjunctive terms in the corresponding CNF of $\varphi$.

When testing an expression on binary vectors we need to try all the possible input vectors for complete functional verification.
Over $\MMM$ the number of test vectors that are needed may be much smaller as a consequence of the existence of ``don't care'' variables.
As we saw in the preceding section, there is an inverse relationship between the lengths of the terms in the canonical DNF of an expression and the absolute values of the outcome of the $\MMM$-evaluations of the expression.
Indeed, the shorter the term. the larger is the number of ``don't care'' variables (for that term).

Let us introduce the following notation.
Let $\mathcal{B}_{min}(\varphi)$ be a reduction of $\mathcal{B}(\varphi)$ to a minimal number of prime implicants, which cover all the implicants of $\mathcal{B}(\varphi)$.
Let $\mathcal{M}_{min}(\varphi)$ be a reduction of $\mathcal{M}(\varphi)$ to a minimal number of terms of $\mathcal{M}(\varphi)_{imp}$, which cover all the implicants of $\mathcal{M}(\varphi)_{imp}$ (with $\mathcal{M}_{min}(\varphi)_{cont} = \mathcal{M}(\varphi)_{cont}$).
We denote by  $\#\psi$, for $\psi$ in DNF, the number of (conjunctive) terms it contains.
\begin{definition}
The functional complexity of a Boolean expression $\varphi$ is $$\mathfrak{C}_f(\varphi) := \#\mathcal{B}_{min}(\varphi) + \#\mathcal{B}_{min}(\lnot \varphi)$$.
\end{definition}
\begin{definition}
The structural complexity of a Boolean expression $\varphi$ is $$\mathfrak{C}_s(\varphi) := \#\mathcal{M}_{min}(\varphi) + \#\mathcal{M}_{min}(\lnot \varphi)$$.
\end{definition}
We may say that the functional complexity puts more weight on the semantics of the expression than the structural complexity, while the syntactic part is more emphasized in the structural complexity.
%
\begin{definition}
The verification complexity $\mathfrak{C}_v(\varphi)$ of a Boolean expression $\varphi$ is the number of $\MMM$-valued test vectors needed for complete verification of the binary functionality of $\varphi$.
\end{definition}

Certainly, $\mathfrak{C}_f(\varphi) \leq \mathfrak{C}_s(\varphi)$.
As for the verification complexity, we have the following.
\begin{proposition}
\label{prop:complexity}
$\mathfrak{C}_v(\varphi) \leq \mathfrak{C}_s(\varphi)$.
\end{proposition}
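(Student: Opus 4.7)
The plan is to exhibit, for each implicant term in $\mathcal{M}_{min}(\varphi)_{imp}$ and in $\mathcal{M}_{min}(\lnot \varphi)_{imp}$, a single $\MMM$-valued test vector that reveals the binary value of $\varphi$ on an entire subcube of $\{\TTT,\FFF\}^n$, and then to argue that these subcubes jointly cover the whole Boolean hypercube.

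First, for each implicant term $\gamma \in \mathcal{M}_{min}(\varphi)_{imp}$ I would use exactly the construction from the proof of Lemma~\ref{lemma:implicant}: set $\llbracket x_i \rrbracket_{v_\gamma} = 2$ for each literal $x_i$ appearing in $\gamma$, $\llbracket x_j \rrbracket_{v_\gamma} = -2$ for each literal $\bar{x}_j$ in $\gamma$, and $\llbracket x_k \rrbracket_{v_\gamma} = \pm 1$ (either sign) for each remaining variable. As in that lemma, $\llbracket \varphi \rrbracket_{v_\gamma} = 2$, so the variable indices of maximal absolute value are exactly those occurring in $\gamma$. By Theorem~\ref{thm:main} (or Theorem~\ref{thm:main_general}), the binary value $\varphi(b_1,\ldots,b_n)$ is then determined to be $\TTT$ for \emph{every} binary assignment that agrees in sign with $v_\gamma$ on the variables of $\gamma$ — in other words, the single test $v_\gamma$ reveals that $\varphi$ is $\TTT$ on the entire subcube specified by $\gamma$. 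I would do the symmetric construction for each $\delta \in \mathcal{M}_{min}(\lnot\varphi)_{imp}$; this time $\llbracket \varphi \rrbracket_{v_\delta} = -2$, and again one test vector establishes that $\varphi$ is $\FFF$ on the subcube picked out by $\delta$.

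Next I would argue that these tests together determine the function on $\{\TTT,\FFF\}^n$. The contradictory part $\mathcal{M}(\varphi)_{cont}$ vanishes on every binary valuation, so $\varphi$ is $\bb$-equivalent to $\mathcal{M}(\varphi)_{imp}$, and by the definition of $\mathcal{M}_{min}$ the selected terms $\mathcal{M}_{min}(\varphi)_{imp}$ still cover every binary input on which $\varphi$ evaluates to $\TTT$; similarly $\mathcal{M}_{min}(\lnot\varphi)_{imp}$ covers every binary input on which $\varphi$ is $\FFF$. Therefore the union of the two families of subcubes obtained above is the full hypercube, and the collected evidence pins down $\varphi$ as a Boolean function. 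Counting tests gives
\[
\mathfrak{C}_v(\varphi) \;\leq\; \#\mathcal{M}_{min}(\varphi)_{imp} + \#\mathcal{M}_{min}(\lnot\varphi)_{imp} \;\leq\; \#\mathcal{M}_{min}(\varphi) + \#\mathcal{M}_{min}(\lnot\varphi) \;=\; \mathfrak{C}_s(\varphi),
\]
where the second inequality absorbs the contradictory terms (which need no testing) into the upper bound.

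The only delicate point I see is justifying, cleanly, that a single test vector $v_\gamma$ really does fix $\varphi$ on the full subcube associated with $\gamma$, rather than only asserting that $\varphi$ is $\TTT$ at one specific binary point. This is precisely where Theorem~\ref{thm:main} does the work: because all variables outside $\gamma$ receive values of strictly smaller absolute value than those in $\gamma$, the ``don't care'' conclusion of the theorem applies to exactly the right index set, and the subcube conclusion is immediate. The remaining bookkeeping — that trivial variable assignments $\pm 1$ on outside variables can be chosen freely, that the two families of subcubes between them exhaust $\{\TTT,\FFF\}^n$, and that contradictory terms cost nothing in the test count — is routine.
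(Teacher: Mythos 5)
Your proof is correct and is essentially the paper's own argument: one test vector per term of $\mathcal{M}_{min}(\varphi)_{imp}$ and of $\mathcal{M}_{min}(\lnot\varphi)_{imp}$, each revealing $\varphi$ on the subcube of that term, with the covering property of $\mathcal{M}_{min}$ finishing the count. The only cosmetic difference is that the paper assigns the outside variables the ternary value $\XXX$ directly (noting that $\KKK$ suffices), whereas you assign them $\pm 1$ against $\pm 2$ and invoke Theorem~\ref{thm:main} to reach the same ``don't care'' subcube conclusion --- your vectors are exactly the paper's under the projection $p_2$.
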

\begin{proof}
Let $\varphi$ be a Boolean expression with $n$ variables.
%
%
It is sufficient to consider the ternary logic $\KKK$ as the MVL over which we form the test vectors.
For each term $\gamma$ of $\mathcal{M}(\varphi)_{imp}$ of length $k$, we form the ternary test vector $v = v(\gamma)$ such that $\llbracket \gamma \rrbracket_v = \TTT$, and such that all the $n-k$ variables that do not appear in $\gamma$ are assigned the value $\XXX$ in $v$.
This ternary test vector $v$ covers $2^{n-k}$ binary test vectors which satisfy $\varphi$.
Thus, by considering all the implicants $\gamma \in \mathcal{M}(\varphi)_{imp}$, we can find all the binary vectors that satisfy $\varphi$. 
Similarly, we form the ternary test vectors $u = u(\delta)$ for each $\delta \in \mathcal{M}(\lnot \varphi)_{imp}$, such that $\llbracket \lnot \varphi \rrbracket_u = \TTT$, thus finding all the binary vectors that do not satisfy $\varphi$.
%
\end{proof}

We do not know of an example in which $\mathfrak{C}_v(\varphi) < \mathfrak{C}_s(\varphi)$.
Hence, it might very well be that the two notions are identical (that is why we chose the general term ``verification complexity'' without referring to $\MMM$).

The information gained from an $\MMM$-valuation is greater than that of a ternary one.
Suppose that the variables in $\varphi$ are $x_1, \ldots, x_n$ and that a ternary test vector $v$ assigns the value $\XXX$ to $x_1, \ldots, x_k$, while the other variables are assigned binary values.
Suppose also that $\llbracket \varphi \rrbracket_v = \TTT$.
Then we know that there exists an implicant term $\gamma \in \mathcal{M}(\varphi)_{imp}$ of length at most $n-k$, whose variables are among $x_{k+1}, \ldots, x_n$, and whose literals agree with the valuation of $x_{k+1}, \ldots, x_n$.
Over $\MMM$, a possible valuation $w$ is to assign the variables $x_1, \ldots, x_n$ values $a_1, \ldots, a_n$ with increasing absolute values, with some chosen signs to $a_1, \ldots, a_k$ and the signs of $a_{k+1}, \ldots, a_n$ agree with their values in $v$.
We know that $\llbracket \varphi \rrbracket_w =|a_j|$, $j \geq k+1$.
If $j > k+1$ then we know that the following property $P(j)$ holds:
\begin{itemize}
\item $P(j)$: There exists an implicant whose set of variables contains $x_j$, and possibly other variables among $x_{j+1}, \ldots, x_n$, and whose literals agree with the valuation $v$. 
\end{itemize}
Since the upper bound on the length of the implicant is smaller than in the ternary case, then over $\MMM$ we know for sure on more binary vectors which satisfy $\varphi$ (since the variables $x_{k+1}, \ldots, x_{j-1}$ are ``don't care'').
We also know that the following property $N(j+1)$ holds:
\begin{itemize}
\item $N(j+1)$: there is no implicant term of $\varphi$ which contains $x_{j+1}, \ldots, x_n$ and with literals that agree with the valuations $v$.
\end{itemize}
This is implied by the nice property of having a dynamic boundary between the ``care'' values and the ``don't care'' values when performing valuations over $\MMM$.

Similar analysis with respect to $\lnot \varphi$ applies to the case where $\llbracket \varphi \rrbracket_v = \FFF$.
When $\llbracket \varphi \rrbracket_v = \XXX$ for a ternary vector $v$ as above then we know that property $N_{k+1}$ holds.
Over $\MMM$, a corresponding valuation $w$ will give $\llbracket \varphi \rrbracket_w =|a_j|$, $j \leq k$, assuming the result is positive (for a negative result we refer to $\lnot \varphi$).
Then we know that $N(j+1)$ holds, 
So, if $j <k$ then the set of terms that we know that they are not implicants is larger than in the ternary case.
In addition, we know that $P(j)$ holds - with no analogous information gained in the ternary case.

Overall, we see that $\MMM$-tests are more informative than $\KKK$-tests, but, nevertheless, we do not know if it suffices to reduce the number of tests needed for complete functional verification in general (and if yes, whether such a reduction is significant).
We also do not know if we can make use of property $N(\cdot)$ to show that in special cases $\mathfrak{C}_v(\varphi) < \mathfrak{C}_f(\varphi)$ may hold.
In case property $N(\cdot)$ does not help in reducing $\mathfrak{C}_v(\varphi)$ then we have: $\mathfrak{C}_f(\varphi) \leq \mathfrak{C}_v(\varphi)$. 

Suppose that we know that $| \varphi(a_1, \ldots, a_n)| = a_i$.
Then, by Theorem~\ref{thm:main_general}, the information gained from this computation is equivalent to the one obtained by restricting ourselves to only $6$ values: $\pm a, \pm b, \pm c$, with $0 < a < b < c$, and the following mapping:
$a_j \mapsto a$, for $0 < a_j < |a_i|$ (and $a_j \mapsto -a$, for $-|a_i| < a_j < 0$); $a_j \mapsto b$ for $a_j = |a_i|$ (and $a_j \mapsto -b$ for $a_j = -|a_i|$); $a_j \mapsto c$, for $a_j > |a_i|$ (and $a_j \mapsto -c$, for $a_j < -|a_i|$).
In fact, if we use also the value $0$ then we can be satisfied with only $5$ values, with $a = 0$, and this is the optimal number of values for maximal information gained from  the computations in case the expression we test is known to us. 

Let us look at the following simple examples.
The Boolean expressions are given as combinational circuits, where the $\lor$, $\land$ and $\lnot$ gates are interpreted as the maximum, minimum and negation respectively over $\MMM$.
For better readability, we use as before the sum, product and complement notation instead of $\lor$, $\land$ and $\lnot$.
\begin{example}
AND gate on $n$ inputs: $\varphi(x_1, \ldots, x_n) = x_1 x_2 \cdots x_n =
\mathcal{M}_{min}(\varphi)$. Then $\mathcal{M}_{min}(\lnot \varphi) = \bar{x}_1 + \bar{x}_2 +  \cdots +  \bar{x}_n$.
Hence, $\mathfrak{C}_s(\varphi) = n+1$, and this is also the value of $\mathfrak{C}_f(\varphi)$ and of $\mathfrak{C}_v(\varphi)$.
The test vector that corresponds to $x_1 x_2 \cdots x_n$ is $(1,1, \ldots, 1)$,
and for each term $\bar{x}_i$, $i = 1, \ldots, n$, we form the test vector which assigns $x_i$ the value $-2$ and the other variables the value $1$ (it does not matter whether the value is $1$ or $-1$ as it is ``don't care'').
\end{example}
%
%

\begin{example}
\label{ex:mux}
Multiplexer (MUX) with $n = 2^k$ data inputs $d_0, \ldots, d_{n-1}$ and $k$ selectors $s_0, \ldots, s_{k-1}$. It represents the function
\begin{align*}
\varphi(d_0, \ldots, d_{n-1}, & s_0, \ldots, s_{k-1}) = \mathcal{M}_{min}(\varphi) = \\
& d_0 \bar{s}_{k-1} \cdots  \bar{s}_1\bar{s}_0  + d_1 \bar{s}_{k-1} \cdots \bar{s}_1 s_0 + \cdots + d_{n-1} s_{k-1} \cdots s_1 s_0.
\end{align*}
One can show that
\begin{align*}
\mathcal{M}_{min}(\lnot \varphi) = & \\
& \bar{d}_0 \bar{s}_{k-1} \cdots  \bar{s}_1\bar{s}_0  + \bar{d}_1 \bar{s}_{k-1} \cdots  \bar{s}_1 s_0 + \cdots + \bar{d}_{n-1} s_{k-1} \cdots s_1 s_0.
\end{align*}
Here, $\mathfrak{C}_s(\varphi) = \mathfrak{C}_f(\varphi) = \mathfrak{C}_v(\varphi) = 2n$.
We can form the following $\MMM$-vectors:
for each of the $n = 2^k$ possibilities of assigning each selector variable $s_i$ the value $\infty$ or the value $-\infty$, we assign the data input $d_j$ that is selected by the corresponding assignment of values to $s_0, \ldots, s_{k-1}$ first the value $2$ and then the value $-2$, while all the other data inputs are assigned the value $1$.
In Fig.~\ref{fig:mux}, a computation of a multiplexer with $4$ data entries is shown.

When $n=2$, we get the ``If-Then-Else'' function: $\varphi(d_0, d_1, s) = d_0 \bar{s} + d_1 s = \mathcal{M}_{min}(\varphi)$.
Then, $\lnot \varphi =  (\bar{d}_0 + s) (\bar{d}_1 + \bar{s}) = \bar{d}_0 \bar{d}_1 + \bar{d}_0 \bar{s} + \bar{d}_1 s + s \bar{s}$ and 
$\mathcal{M}_{min}(\lnot \varphi) = \bar{d}_0 \bar{s} + \bar{d}_1 s$, since $\bar{d}_0 \bar{d}_1$ is redundant by the consensus rule, and $s \bar{s}$ is a contradictory term.
\begin{figure}[htb]
\centering
\scalebox{0.5}{ \input 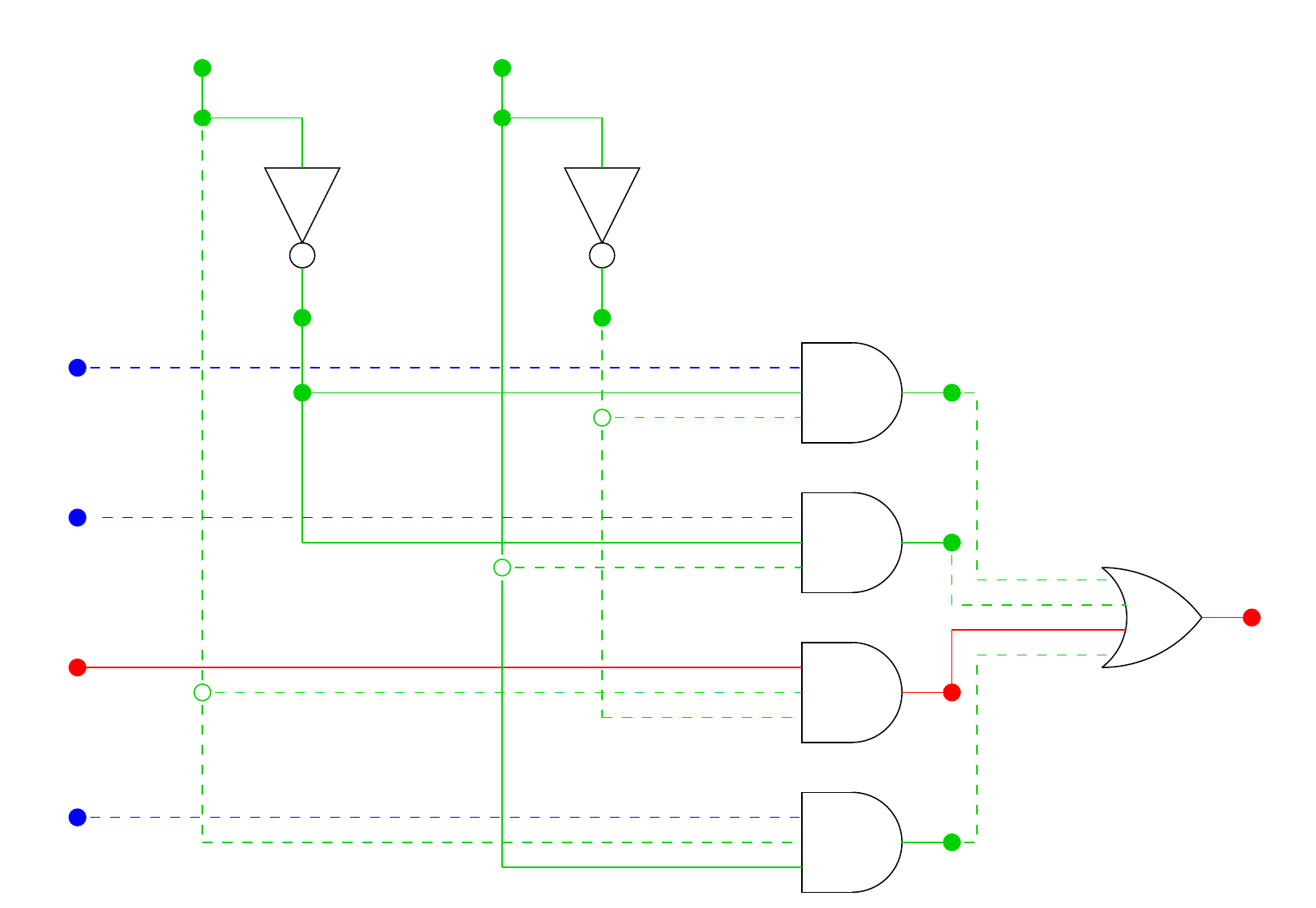_t }
\caption{Multiplexer computation}
\label{fig:mux}
\vspace{-15pt}
\end{figure}
%
%
\end{example}

Note that the selectors in Example~\ref{ex:mux} are assigned the truth values $\pm \infty$ so that the output value is not of the selectors but of the selected data.



\section{Verification of Combinational Circuits}
\label{sec:combinational}
\subsection{Comparing the Quality of Equivalent Designs}
Digital combinational circuits do not contain memory elements, hence they form Boolean expressions which represent Boolean functions.
Equivalent Boolean functions may be expressed in many ways, and a major question concerns the quality of the chosen design.
There is no definite answer to this question and it depends on the needs.
The synthesis phase of transforming a circuit design in the form of a Register Transfer Level (RTL) into a gate-level description is optimized with respect to constraints like size, timing, power consumption or ease of testability, and all these factors need to be considered when evaluating the quality of the design.
   
When a Boolean function is simple enough and is designed as an expression in DNF, as is the case of a Programmable Logic Array (PLA) and a Programmable Array Logic (PAL), then, most likely, we would prefer the minimal DNF: it is minimal in size and also its verification complexity (see Section~\ref{sec:complexity}) is the lowest.
The number of terms in a DNF of a Boolean function is, in general, exponential, and finding the minimal DNF expression, an NP-hard problem, is then double exponential in computational complexity.
Besides classical methods for minimizing the DNF, like the Quine-McCluskey algorithm \cite{Q52, M56}, which is good for small designs, other methods use heuristics for computing approximations to minimal representations, e.g. the well-known Espresso minimizer \cite{{BHMS84}}, which is also suitable for multiple outputs and multiple-level logics (and also makes use of multiple-valued logic \cite{R86} - not in the same meaning as here).

But we are not going to delve here into the intricate issues of design and optimization of circuits.
We would like to see how can we use $\MMM$-based simulations in order to find differences between two designs (or blocks in designs) which represent equivalent Boolean functions.
As we have seen, the differences in $\MMM$-tests are due to differences in implicant lengths of $\mathcal{M}(\varphi)$ and $\mathcal{M}(\lnot \varphi)$ of Boolean expressions $\varphi$, which are revealed in differences in absolute values in the $\MMM$-simulations.
These differences may represent different levels of abstractions (as is normally the case in the design flow of a circuit), but may also be interpreted as representing different degrees of truth, in the sense of fuzzy logic: a higher absolute value of a test result means a higher truth degree, or an event which is more common since it refers to a larger subset of binary input vectors that produce a similar computation.
A higher absolute value refers also to a higher ``noise stability'' (see \cite{O14}): it is less affected by a random flipping of the values of the inputs. 

\begin{figure}[htb]
\centering
\scalebox{0.3}{ \input 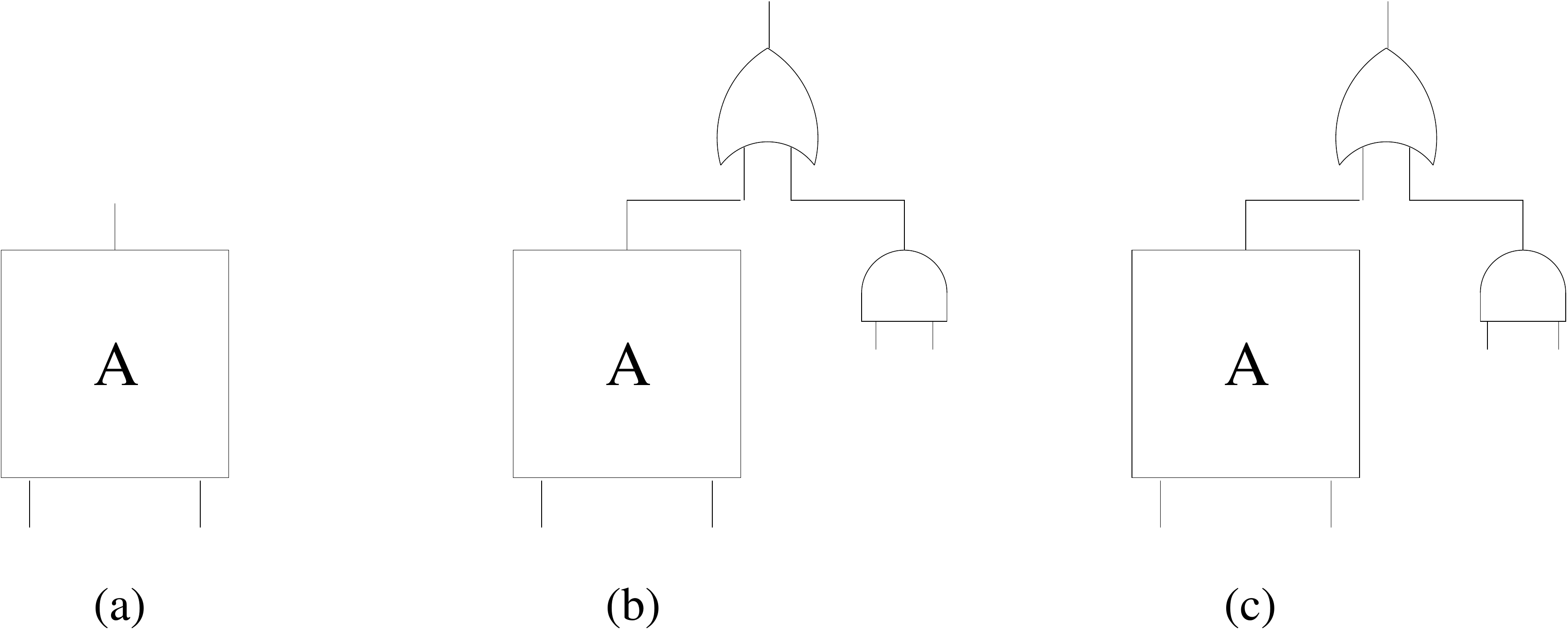_t }
\caption{Quality and equivalence checking}
\label{fig:equiv}
\vspace{-15pt}
\end{figure}

\begin{example}
\label{ex:equiv}    
The circuit in Fig.~\ref{fig:equiv}(b) is identical to the circuit in Fig.~\ref{fig:equiv}(a), except for a disjunction of the output with the contradictory term $x_1 \bar{x}_1$.
Thus, the two circuits are binary equivalent and $\bb$-simulations cannot tell them apart.
Here is where $\MMM$-simulations can be of help.
Suppose we run random simulations over $\MMM$ and the input variables are assigned distinct absolute values $1, \ldots, n$ with random signs at each run.
Then, the term $x_1 \bar{x}_1$ is assigned the value $-k$ with probability $1/n$.
Hence, if there is a probability of $p_k$ for the output of the block A to be less than $-k$ when $x_1$ is assigned the value $\pm k$, then the probability of observing difference in the behavior of the two circuits (the output of the ``better'' design is of higher absolute value) in a random run is $p = \sum_{k=1}^{n-1} p_{k}$, which may be high.

Of course, when the same redundant term appears deeper in the design then it is more difficult to detect it, as it has less chance to express itself in the primary outputs.
However, a more thorough inspection into the interior of the design may reveal exceptional behavior dew to this term. \\
A similar analysis applies to a redundant conjunction with a tautology term of the form $x_1 + \bar{x}_1$.
\end{example}
\subsection{Simulations over $\MMM$}
In order to run $\MMM$-simulations on a circuit design, we need first transform it to the $\MMM$-setting.
Given a gate-level description of the design, the transformation can be executed automatically.
The  Boolean domain of values of the signal variables is replaced by $\widehat{\mathbb{Z}}$, with some chosen integer $N$, larger than all other absolute values assigned to the inputs, representing $\infty$ as the value of absolute truth.
Then, all binary operators (representing gates) are expressed through $\lor$, $\land$ and $\lnot$, and finally, these operators are defined as the maximum, minimum and negation respectively.

When our intention is to perform functional validation or a satisfiability problem then we have seen that the number of $\MMM$-tests may be significantly smaller than the number of test in the binary setting, also when we do not hope for complete verification but aim for a better coverage.

When the design is treated as a black box then, in general, the idea is to assign the input variables different absolute values in order to maximize the benefit of performing simulations over $\MMM$.
Unlike the situation in the ternary logic setting, we do not need to decide in advance which inputs are assigned ``don't care'' values.
The boundary between the ``don't care'' and ``care'' variables is dynamic and set upon {\em after} each simulation: the values that are less than the output (in absolute value) may be regarded as ``don't care''.
This means that the result of a single $\MMM$-test contains the information of both the result of the corresponding binary simulation and at the same time the result of several ternary simulations.

The larger the part of the ``don't care'' variables, the more informative is a simulation - it covers a larger set of binary input vectors.
In order to increase the size of the ``don't care'' variables, we may perform more simulations with circular shifts of the absolute values of some of the variables, without changing their signs.
This procedure is the heart of Algorithm~\ref{alg:abst}.
%

Similar to the assignment of truth values $\pm \infty$ to the selectors in Example~\ref{ex:mux}, it is recommended to assign $\pm \infty$ values to other control variables like ``clock'', ``enable'', ``reset'', etc, so that the value of the output will not be that of a control variable but of a data variable.

In the case of Exclusive-Or (XOR) (or its generalization to $n$ variables, the notorious Parity function) the output is always of the smaller absolute value among the inputs (see Table~\ref{table:op}): $|a \oplus b| = \min(|a|, |b|)$.
This makes it more difficult to verify circuits that contain lots of XOR gates (e.g. multipliers).
Here, $\MMM$ can be used in order to check whether the output is {\em larger} (in absolute value) than what is expected.

%
%

When the design is complex then it may happen that an output variable relies on many inputs.
Hence, the number of ``don't cares'' may be small, making it less advantageous to perform simulations over $\MMM$.
In this case, if the design is not treated as a black box, we can first verify sub-blocks before verifying the whole design.
But this kind of behavior is not necessarily the rule.
In fact, in a design where the $\land$, $\lor$ and $\lnot$ operators are distributed randomly then by the symmetry of these operators one can expect a uniform distribution of the absolute values in the design, including among the primary outputs, when such a uniform distribution is forced upon the input values.
\subsection{An Algorithm for Obtaining a Maximal Abstract Valuation}
In what follows we use the same notation, namely $\varphi$, for a combinational circuit, the corresponding Boolean expression, and the Boolean function it represents, with arguments in $\bb$, $\KKK$ or $\MMM$, and operators of polymorphic types.

%
%
\begin{definition}
An abstraction of a vector $v \in \KKK^n$ is a vector $v' \in \KKK^n$ which is obtained from $v$ by assigning $\XXX$-values to zero or more of the binary entries of $v$.
The vector $v'$ is a strict abstraction of $v$ if $v'$ is an abstraction of $v$ and $v' \neq v$.
\end{definition}
For example, $ (\TTT, \XXX, \FFF, \XXX, \FFF)$ is a strict abstraction of $(\TTT, \XXX, \FFF, \TTT, \FFF)$.
The abstraction relation induces a partial order on $\KKK^n$.
%

\begin{definition}
Given a Boolean expression $\varphi =\varphi(x_1, \ldots, x_n)$, a vector $v \in \KKK^n$ is a maximal abstract valuation with respect to $\varphi$ if $\llbracket \varphi \rrbracket_v \neq \XXX$, and for any strict abstraction $v'$ of $v$, $\llbracket \varphi \rrbracket_{v'} = \XXX$.  
\end{definition}
%
%
There is a one-to-one correspondence between the maximal abstract valuations $v$ satisfying $\llbracket \varphi \rrbracket_v = \TTT$ and the set of implicant terms of $\mathcal{M}(\varphi)$, and, similarly, between the maximal abstract valuations $v$ satisfying $\llbracket \varphi \rrbracket_v = \FFF$ and the implicant terms of $\mathcal{M}(\lnot \varphi)$.

\begin{definition}
A signed permutation of size $n$ is a vector $w$ which is a permutation of  $\{1, \ldots, n\}$ augmented with a sign for each number.
\end{definition}

We refer to $w$ also as a pair $(v, \sigma) \in \{-1, 1\}^n \times S_n$, and denote by $w.v$ and $w.\sigma$ the binary vector and the permutation respectively that $w$ is comprised of.
For example, $w = (3, -1, -2, 5, -4)$ is a signed permutation which is the (component-wise) product of $v = (1, -1, -1, 1, -1)$ and $\sigma = (3,1,2,5,4)$.
Given a permutation $\sigma$, we denote by $\sigma [i \lra j]$ the permutation obtained from $\sigma$ by composing it with the transposition that swaps the values $i$ and $j$.
For example, $(3,1,2,5,4) [2 \lra 4] = (3,1,4,5,2)$.

Algorithm~\ref{alg:abst} computes an abstraction $v'$ of a binary vector $v$ (over the set $\{ -1, 1 \}$), which is a maximal abstract valuation with respect to a combinational design $\varphi$.
As shown before, the computation of these implicant terms of $\varphi$ and $\lnot  \varphi$ plays an important role in verification of Boolean expressions.
We would like to mention that these are not necessarily prime implicants, as they reflect both the structural and the functional properties of the expression $\varphi$ and not only its functionality as do the prime implicants.

The input vector in Algorithm~\ref{alg:abst} is given as a signed permutation $w=(v, \sigma)$, and the binary vector $v$ is the projection of $w \in \MMM^n$ to $\bb^n$.
As already mentioned, when there is no knowledge on $\varphi$, then it is recommended to use different absolute values for the input vector, e.g. in the form of a signed permutation.

The computation of a maximal abstract valuation is achieved by an iterated greedy search: if $w = w_0, w_1, \ldots, w_r = w'$ is the sequence of computed vectors then $|\llbracket \varphi \rrbracket_{w_{i-1}}| \leq |\llbracket \varphi \rrbracket_{w_i}|$, $i = 1, \ldots, r$.
The idea is the following.
When $| \llbracket \varphi \rrbracket_{w_i} | = k$ then we know that all input variables which were assigned a value $l$ with $|l| < k$ are ``don't care''.
The variable $x_{\sigma^{-1}(k)}$ is of type ``care'' (if we will map it to $X$ and perform the computation over $\KKK$ the result will be $X$). 
But there may be other variables, $x_{\sigma^{-1}(l)}$, with $| l | > k$, which are ``don't care''.
So, first we swap the absolute values (but not the signs) assigned to $x_{\sigma^{-1}(k)}$ and to $x_{\sigma^{-1}(n)}$ and perform another simulation.
Several new variables may now turn out to be ``don't care'', and we repeat the procedure of swapping, but now with $n-1$ instead of $n$ as the largest absolute value, and with the resulting $i' \geq i$ instead of $i$.
We keep iterating until the list of potential ``don't care'' variables is exhausted.  
The result is then projected to $\KKK$, providing a maximal abstract valuation which is an abstraction of $v$.

\begin{algorithm}[H]
\begin{algorithmic}[1]
\REQUIRE A combinational design $\varphi(x_1, \ldots, x_n)$, a signed permutation $w=(v, \sigma)$
\ENSURE An abstraction $v'$ of $v$ which is a maximal abstract valuation with respect to $\varphi$
\STATE $i \gets 1; j \gets n$
\WHILE{$i < j$}
	\STATE $i \gets |\llbracket \varphi \rrbracket_{w}|$
	\STATE $w. \sigma \gets \sigma [i \lra j]$
	\STATE $j \gets j-1$
\ENDWHILE
\STATE $v' \gets p_i(w)$ \label{l:abstraction}
\COMMENT{$v'$ is the (component-wise) image of $v$ in $\KKK^n$, where if $|k| < i$ then $p_i(k) = \XXX$}
\RETURN $v'$
\end{algorithmic}
\caption{Computation of a maximal abstract valuation}
\label{alg:abst}
\end{algorithm}
%
\begin{example}
The computation shown in Fig.~\ref{fig:forest} is with input vector $(-1,3,-2,4)$: $\sigma = (1,3,2,4), v = (-1,1,-1,1)$.
The result of the main output is $2$, referring to the value assigned to the input variable $x_3$ (here $\sigma^{-1}(2) = 3$).
In order to compute a maximal abstract valuation following Algorithm~\ref{alg:abst}, we swap the values $2$ and $4$ in $\sigma$, obtaining the new input vector $(-1,3,-4,2)$.
The result of the new computation, as shown in Fig.~\ref{fig:forest1}, is $3$.
The new values of the indexes in the algorithm are $i = j =3$, and the condition of the ``while'' loop is not satisfied, so there are no more iterations.
The maximal abstract valuation vector is $(\XXX, \TTT, \FFF, \XXX)$.
\begin{figure}[hbt]
\centering
\scalebox{0.5}{ \input 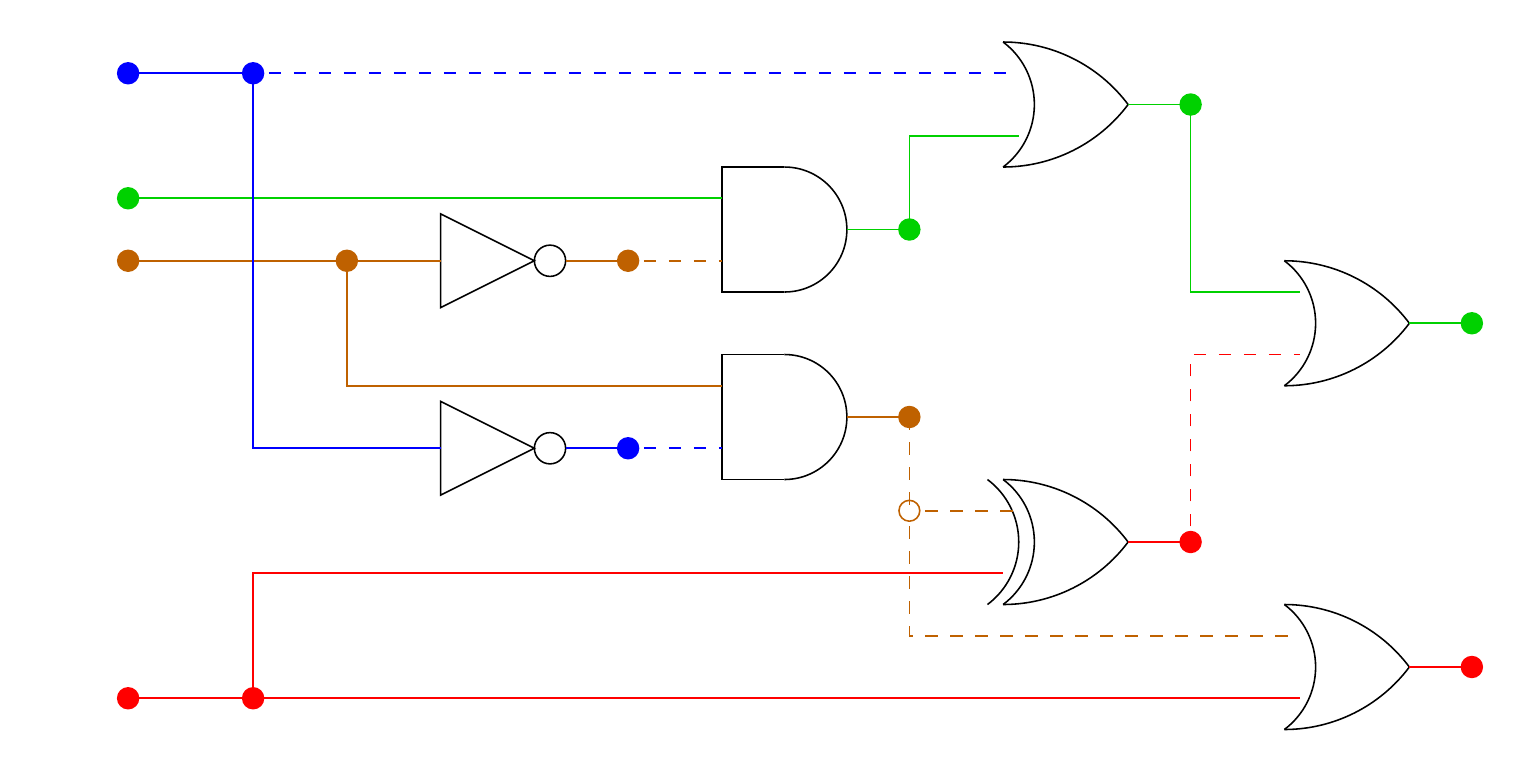_t }
\caption{A new computation over $\MMM$}
\label{fig:forest1}
\vspace{-15pt}
\end{figure}
\end{example}
\begin{proposition}
Algorithm~\ref{alg:abst} computes a maximal abstract valuation of a combinational design $\varphi$.
\end{proposition}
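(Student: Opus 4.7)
The plan is to verify, at termination with final values $i^*$ and $w^*$, that $v' = p_{i^*}(w^*)$ satisfies the two defining properties of a maximal abstract valuation: $\llbracket \varphi \rrbracket_{v'} \neq \XXX$, and every strict abstraction of $v'$ evaluates to $\XXX$. Termination itself is immediate, since $j$ strictly decreases by one each iteration and the loop guard requires $i < j$, bounding the number of iterations by $n-1$.

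The analytic heart of the argument is a monotonicity claim: $|\llbracket \varphi \rrbracket_w|$ is non-decreasing across the sequence of states produced by the loop. In iteration $r$, let $i_r = |\llbracket \varphi \rrbracket_{w_{r-1}}|$ be the output absolute value before the swap, and consider the swap of values $i_r$ and $j_{r-1}$ in $\sigma$. This modifies the input at position $a = \sigma^{-1}(i_r)$ from $\pm i_r$ to $\pm j_{r-1}$ (same sign, larger absolute value) and the input at position $b = \sigma^{-1}(j_{r-1})$ from $\pm j_{r-1}$ to $\pm i_r$. I would first show, by tracing a spanning-forest path as in Proposition~\ref{prop:path} and doing case analysis on $\max$, $\min$ and $\lnot$ nodes, that increasing $|w_a|$ while fixing its sign can only push the output absolute value to something at least $i_r$. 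Then, with the intermediate output still of absolute value $\geq i_r$ and position $b$ still of absolute value $j_{r-1} > i_r$, the second change (decreasing $|w_b|$ down to $i_r$) will be analyzed by a symmetric path argument, showing the output absolute value cannot drop below $i_r$. Combining these gives $|\llbracket \varphi \rrbracket_{w_r}| \geq i_r$.

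Given monotonicity, I would verify that at termination $i^* = |\llbracket \varphi \rrbracket_{w^*}|$ (noting that the loop is designed so the last non-trivial update of $i$ is consistent with the post-swap state). Then, since $p_{i^*}$ is a homomorphism $\MMM \to \KKK$, we get $\llbracket \varphi \rrbracket_{v'} = p_{i^*}(\llbracket \varphi \rrbracket_{w^*}) = p_{i^*}(\pm i^*) \in \{\TTT,\FFF\}$, using Theorem~\ref{thm:main_general} to justify that the abstraction of inputs of absolute value $< i^*$ to $\XXX$ does not affect the binary outcome. For maximality, I would track the history of swaps: in iteration $r$ the current care variable is placed at $\sigma$-value $j_{r-1}$ and is never touched again, since subsequent iterations swap only with smaller $\sigma$-values. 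Consequently the positions with final $\sigma^*$-value strictly above $i^*$ are precisely the care variables produced by successive iterations, each one certified by the monotonicity step to be essential; abstracting any of them breaks the spanning-forest path of Proposition~\ref{prop:path} carrying the output's absolute value, forcing the $\KKK$-evaluation to $\XXX$.

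The main obstacle is the monotonicity step, and in particular its second half. The swap is genuinely simultaneous — one input's absolute value grows while another's shrinks — and Theorem~\ref{thm:main_general} by itself controls only one-sided changes above or below the output threshold. The proposed workaround decomposes the swap into two sequential single-variable changes, but this requires careful bookkeeping: the threshold that governs the second change is the (possibly increased) output absolute value after the first change, not the original $i_r$, so one has to verify that the target absolute value $i_r$ is still small enough to fall below this new threshold for Theorem~\ref{thm:main_general} (or a direct path argument) to apply cleanly.
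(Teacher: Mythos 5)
Your overall strategy --- termination, monotonicity of $|\llbracket \varphi \rrbracket_w|$ across iterations, the homomorphism $p_{i^*}$ to show $\llbracket \varphi \rrbracket_{v'} \neq \XXX$, and a history-of-swaps argument for maximality --- is the same as the paper's, which compresses the monotonicity claim into a single sentence and otherwise argues exactly as you do. On monotonicity: your two-step decomposition is indeed awkward for the reason you give, but the claim follows by rerunning the partition argument from the proof of Theorem~\ref{thm:main_general} on the \emph{simultaneous} swap. Operands of absolute value $<i_r$ are untouched; the two swapped inputs keep their signs and stay of absolute value $\geq i_r$; by induction every internal node whose operands all have absolute value $\geq i_r$ keeps its sign and stays of absolute value $\geq i_r$ (the sign of a minimum is determined by whether some operand is negative, dually for maximum), and Lemma~\ref{lem:sign} handles the mixed nodes. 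The root had absolute value exactly $i_r$, so it remains of absolute value $\geq i_r$. That obstacle is therefore surmountable.

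The genuine gap is the step you flag only parenthetically: ``I would verify that at termination $i^* = |\llbracket \varphi \rrbracket_{w^*}|$.'' This identity is false in general, because the loop sets $i$ \emph{before} performing the swap and may then exit without re-evaluating; the paper's own proof makes the same unjustified move when it writes $p_i(\llbracket \varphi \rrbracket_{w}) = p_i(\pm i)$. Concretely, take $\varphi = x_2 \lor (x_1 \land \lnot x_1 \land x_3)$ with $w = (1,2,3)$. The first iteration computes $i=2$, swaps $2 \lra 3$ to obtain $w=(1,3,2)$, sets $j=2$, and the loop exits; but $\llbracket \varphi \rrbracket_{(1,3,2)} = \max(3,-1) = 3$, not $\pm 2$. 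The returned vector is $v' = p_2((1,3,2)) = (\XXX, \TTT, \TTT)$, whose strict abstraction $(\XXX, \TTT, \XXX)$ evaluates over $\KKK$ to $\TTT \lor \XXX = \TTT \neq \XXX$, so $v'$ is not maximal. The culprit is the variable sitting at absolute value exactly $i^*$ after the final swap: it need never have carried the output at any iteration, so neither your certification argument nor the paper's applies to it, and it may be a genuine don't care that $p_{i^*}$ nevertheless keeps binary. The algorithm (and with it the proof) can be repaired by re-evaluating $\varphi$ after the last swap and projecting with the absolute value of that final output (here $p_3$ yields the maximal $(\XXX,\TTT,\XXX)$); as written, both your argument and the paper's break at this point.
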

\begin{proof}
Because at each iteration we swap absolute values which are not smaller than the absolute value of the current output then the next output cannot decrease in absolute value.
This means that the number of variables that will be mapped eventually to $\XXX$ does not decrease with each iteration.
By the end of the algorithm we get $\llbracket \varphi \rrbracket_{v'} = \llbracket \varphi \rrbracket_{p_i(w)} = p_i (\llbracket \varphi \rrbracket_{w}) = p_i (\pm i) \neq \XXX$.
Hence, $v'$ is an abstraction of $v$.

Each of the variables $x_{\sigma^{-1}(l)}$, with $l \geq i$ after the loop terminates, that is, a variable that is not mapped by $p_i$ to $\XXX$ (at line~\ref{l:abstraction}) had at some point a value which was of the same absolute value as the output of $\varphi$.
Hence, if at that point $x_l$ were mapped to $\XXX$ then the output over $\KKK$ of $\varphi$ would have been also $\XXX$,
let alone at the end of the algorithm where possibly more $\XXX$-s were added.
This proves that $v'$ is a maximal abstract valuation with respect to $\varphi$.  
\end{proof}

Algorithm~\ref{alg:abst} may be incorporated in a procedure for satisfiability of a Boolean expression by a SAT solver for the purpose of pruning the search tree by leaving the binary valuation to the variables corresponding to the binary part of the resulting ternary vector of the algorithm and ignoring the ``don't care'' variables.
It may also be worth trying to flip the sign of the variable whose value corresponds to the final result of the iterations part, to see if this variable is also a ``don't care'' and then we can obtain a shorter implicant (which is not an implicant term), or the sign of the computation may then change, e.g. from $-$ to $+$ and then we found a satisfying valuation. 
Another application of Algorithm~\ref{alg:abst} is in equivalence verification, as shown in Algorithm~\ref{alg:equiv_ver}.

The number of iterations for finding a maximal abstract valuation depends on the number and lensths of the implicant terms of $\varphi$ and $\lnot \varphi$ and also on the chosen permutation (which imposes an order on the variables).  
The computation can also be computed in $\KKK$ instead of $\MMM$, but with more iterations (in average), since the boundary between the ``care'' and ``don't care'' at each iteration is not known in advance.
But then, the same line of reasoning applies to the preference of $\KKK$ over $\bb$ as the data structure for performing simulations.
%
%

\subsection{Equivalence Verification by Simulation}
In equivalence verification one tries to verify that two designs $A$ and $B$ are equivalent: for the same binary input vector they produce the same output.
In this section we will present a procedure for equivalence checking by $\MMM$-simulations.
\begin{example}
\label{ex:equiv}    
In Fig.~\ref{fig:equiv}(a) (cal it spec) and Fig.~\ref{fig:equiv}(c) (call it imp) we see two circuits which are identical except for a disjunction of the output of imp with some conjunctive term $x_1 \cdots  \bar{x}_n$, which we assume to produce a wrong binary output. 
If there is a probability of $p_k$ for the output of spec to be less than $-k$ when performing $\MMM$-simulations of random signed permutation tests, then the probability of imp to distinguish itself from spec by producing a greater negative output value is $p = \sum_{k=1}^{n-1} p_{k} / 2^{n-k}$.
Note that this probability may be significantly greater than the probability of the two circuits to produce outputs of different signs (which happens in the rare case of the conjunctive term evaluated to the value $1$, the probability of which is $1/2^n$). 
\end{example}

In Algorithm~\ref{alg:equiv_ver} we describe a simulation procedure for checking the equivalence of two combinational circuits $A$ and $B$.
The procedure first obtains (as an output of an algorithm, could also be randomly) some binary vector $v$ and checks whether the two circuits agree on it.
If not, then a (binary) counter-example was found.
Otherwise, the procedure obtains (again, as an output of an algorithm) a corresponding signed permutation $w=(v,\sigma)$ and by Algorithm~\ref{alg:abst} two maximal abstract valuations $v_A$ and $v_B$ are returned.
If $v_A \neq v_B$ then there is a valuations in $\MMM$ on which $A$ and $B$ do not agree.
If we want to proceed manually, then we can examine the two designs on the valuation on which they do not agree and try to find the reason for that.
The partition of the graph into a spanning forest may turn out to be of great help.
If we want the procedure to be fully automatic, then we can continue with the algorithm and  try all (subject to some limit) the relevant combinations of replacing $\XXX$ values by binary ones in $v_A$ and $v_B$ and check for binary nonequivalence between the two circuits.
If no binary counter example was found then the process repeats itself with another binary vector and another signed permutation.

The idea behind the algorithm is the following.
First we compute implicant terms (but not necessarily prime implicants) for a larger coverage of the search.
Then, we look for binary nonequivalence in the environment of an $\MMM$-nonequivalent.
The latter is more common and hence can be more easily detected, see e.g. Example~\ref{ex:equiv}. 
Finally, the existence of an $\MMM$-nonequivalent hints to a possible binary nonequivalence.
%
\begin{algorithm}
\begin{algorithmic}[1]
\REQUIRE Two combinational designs $A, B$ on inputs $x_1, \ldots, x_n$
\ENSURE  If found -- a counter example to the equivalence of $A$ and $B$
\WHILE{true}
	\STATE Obtain a vector $v \in \{1,-1\}^n$
	\IF{$A(v) \neq B(v)$}
		\RETURN $v$
	\ENDIF
	\STATE Obtain a signed permutation $w = (v, \sigma)$ of size $n$
	\STATE $v_A \gets$ a maximal abstract valuation by Algorithm~\ref{alg:abst} on $A, w$
	\STATE $v_B \gets$ a maximal abstract valuation by Algorithm~\ref{alg:abst} on $B, w$
	\IF {$v_A \neq v_B$}
		\IF{$\exists k>0$ indexes $i$ with $v_B[i] \neq v_A[i] = \XXX$}
			\FOR{each of the $2^k$ binary combinations $u$ of flipping the values of $v[i]$} 
				\IF{$B(u) \neq B(v)$}
					\RETURN $u$
				\ENDIF
			\ENDFOR
		\ENDIF
		\STATE Repeat the process on $A$ for indexes $i$ satisfying $v_A[i] \neq v_B[i] = \XXX$		
	\ENDIF
\ENDWHILE
\end{algorithmic}
\caption{Simulation procedure for nonequivalence}
\label{alg:equiv_ver}
\end{algorithm}
%
%


\section{Verification of Sequential Circuits}
\label{sec:sequential}
Sequential circuits contain memory elements which introduce cycles and time dependent properties, hence they are much harder to verify.
However, at each cycle (time step), the behavior is similar to that of a combinational design, where the output as well as the memory variables are Boolean functions of the input and the memory variables.
Thus, in some common model checking methods, like bounded model checking (see e.g. \cite{BCCZ99}, \cite{SSS00}, \cite{MRS03}, \cite{M03}) the circuit is finitely unrolled and then methods like SAT-based algorithms are applied to the resulting combinational design.
Hence, the approach presented in the previous section applies also here.
Yet, $\MMM$-simulations can contribute to the verification of sequential circuits in ways which are unique to these types of circuits.
One such way is achieved by augmenting the input values with temporal data.
In what follows we hint briefly to the potential of performing $\MMM$-simulations on sequential designs. 
\subsection{Temporal Values}
One way we can benefit from using $\MMM$ instead of binary logic is by incorporating time into the variable values.
That is,an implicit global clock measures absolute time, and each new input value is assigned the time (date) of its ``birth''. 
We may use the $k$ least significant digits for the truth values (the truth part) and the other digits (the temporal part) for expressing the time of birth of that value.
At each time step the temporal parts of all the values of the input variables are incremented by $1$, while the truth parts may vary.
For example, suppose we allocate the last $3$ digits for the truth part and the other digits for the temporal part.
Then the input values may look like this (for $6$ input variables):
\begin{equation}
\begin{array}{rrrrrrrrrrrrr}
\mbox{Time 0:  }&  00\,005&    &-00\,002&    &-00\,003&    &-00\,004&    &00\,001&    &00\,006 \nonumber \\
\mbox{Time 1:  }&  -01\,004&    &-01\,005&    &01\,002&    &01\,001&    &01\,006&    &-01\,003 \nonumber \\
\mbox{Time 2:  }&  -02\,006&    &02\,003&    &02\,002&    &-02\,005&    &-02\,004&    &02\,001 \nonumber \\
\mbox{Time 3:  }&  03\,002&    &-03\,005&    &03\,001&    &-03\,003&    &-03\,006&    &03\,004 \nonumber \\
\vdots \\
\mbox{Time 40:  }&  -40\,006&    &-40\,005&    &40\,001&    &40\,002&    &-40\,004&    &40\,003 \nonumber
\end{array}
\end{equation}
Within this approach of an increasing sequence of temporal values we may still want to make sure that special control variables will obtain larger absolute values than those of the variables they interact with. 

The advantage of having temporal values is that the state of the circuit at a given time reflects directly its history: each value of a non-input variable bears its ``age'', in addition to the truth degree and input variable it originated at.
We can then observe the flow of data in space-time; e.g. pick a specific value at birth in some input variable, trace its evolution along time, until death at some time in future.
Timing considerations in the design stage may also benefit from the information within temporal values.  
\subsection{Initialization.}
In the setting of ternary logic, one starts from an ``all-$\XXX$'' state and simulates with a sequence of binary input vectors until reaching a complete binary state, thus finding a ``universal'' initialization sequence. 
When performing any simulation task over $\MMM$ with ``time stamp'' as above then at the same time we are also conducting an initialization test at the background.
Moreover, at each time step $k$ a new initialization test starts.
Thus, if we are interested in the shortest initialization sequence, we can check at each time step $l$ the lowest temporal part $k$ that exists in the values of the variables of that state, which refers to an initialization sequence of length $(l-k)+2$.
Since the input values are incremented in absolute values at each time step, then, by Theorem~\ref{thm:main_general}, when reaching a state in which all temporal values smaller than $k$ already vanished then this is equivalent to the disappearing of the $\XXX$ values in the ternary initialization.
%
%
\subsection{Prioritizing.}
To a certain extent, it is possible to manipulate the flow of data in the design.
For example, the absolute value of the output of a XOR or XNOR gate equals the minimum of the absolute values of the inputs.
Then, a prioritizing methodology may be applied to drive desired inputs toward the outputs by assigning them smaller absolute values so that they will propagate through these gates in a design full of them.
Similar methods may be applied in order to increase the coverage of elements like signals, gates or latches in simulations by forcing the data to pass through these elements.
Formal or semi-formal methods may also be applied here.
Otherwise, we can measure the coverage performance of a simulation sequence in terms of the coverage of the graph representation by the trees that correspond to the values at the primary outputs.
\subsection{Composition of Blocks.}
When a design is composed of several blocks then we may run $\MMM$-simulations in a way that reflects this higher order partition.
For example, when there is little overlap between the inputs of the blocks then the input values may be grouped by absolute values according to the blocks, possibly assigning higher absolute values to blocks that are of shorter distance to the primary outputs.
In this way, we shift attention to the hierarchical structure of the design and to the interactions and dependencies between the blocks rather than to the more detailed structure inside the blocks.
\subsection{Equivalence Verification.}
The discussion and methods presented when considering combinational designs can be extended to sequential ones. 
%
As for comparing the qualities of the designs, we refer to \cite{CHR12} for a somewhat related work.
%
\subsection{Generating Assertions.}
When trying to formally verify sequential circuits, whether for property or for equivalence checking, it is almost unavoidable but to try and break the problem into sub-problems to be verified first.
This incremental methodology requires the generation of potential assertions, also referred to as lemmas, and the more refined MVL may be of help here.
In equivalence verification we can find correlations between variables, applying probabilistic methods if needed, in a more accurate manner over $\MMM$ since the spread of values is wider.
The designer may also provide refined assertions over $\MMM$ for assertion-based verification and simulation.
For example, if the designer knows that some property should hold under an assumption that relies on specific input values then the property may be checked with these input values being of higher absolute value than other input values, to make sure that the output does not depend in this case on other inputs.
Assertions may also refer to the temporal values of the variables, conducting an explicit model checking over $\MMM$.
For example, properties may include exact absolute time and exact delays by referring to the temporal part of the clock variable, so that it becomes explicit and natural to express properties of Metric Temporal Logic (MTL) \cite{K90}, \cite{HOW13} over $\mathbb{Z}$. 
These ideas need to be further explored.
%
%


\section{Conclusion}
\label{sec:conclusion}
Simulations over the multiple-valued logic $\MMM$ are more refined and informative than over binary and ternary logics, thus providing a novel potential approach to the complex task of verification of HW designs.
A state of the system is enriched with data that includes degrees of truth and, for sequential designs, identity stamps like ``place'' and ``date of birth''.
We presented the theory behind computations and verification over $\MMM$, and discussed general directions, including algorithms, for applying $\MMM$-simulations to different verification tasks.
Future goals include implementing and checking these ideas on real HW designs and developing specific and elaborate strategies and algorithms.  

\end{document}